\newcommand{\R}{\mathbb{R}}
\newcommand{\E}{\mathbb{E}}
\newcommand{\s}{\mathbf{s}}
\newcommand{\x}{\mathbf{x}}
\newcommand{\blambda}{\boldsymbol{\lambda}}
\newcommand{\paren}[1]{\left( {#1} \right)}
\newcommand{\m} {\mathbf{m}}
\newcommand{\OR}{\mathrm{OR}}
\newcommand{\pkg}[1]{{\fontseries{b}\selectfont #1}}
\newtheoremstyle{mytheoremstyle}
  {\topsep} 
  {0pt} 
  {} 
  {} 
  {\bfseries} 
  {.} 
  {.5em} 
  {} 
\theoremstyle{mytheoremstyle}
\newtheorem{thm}{Theorem}
\newtheorem{lem}{Lemma}
\newtheorem{defn}{Definition}
\newtheorem{assum}{Assumption}
\newcommand{\blind}{0}
\title{Dependence-robust confidence intervals for capture-recapture surveys}
  \author{[Blinded for review]
} \fi
\author{Jinghao Sun\footnote{Jinghao Sun (jinghao.sun@yale.edu) is a PhD Candidate in Biostatistics at Yale School of Public Health, New Haven, CT, USA. [Corresponding author]},\\ Luk Van Baelen\footnote{Luk Van Baelen (luk.vanbaelen@sciensano.be) is a Senior Scientist at Department of Epidemiology and public health, Sciensano, Rue Juliette Wytsmanstraat, 14, Brussels 1050, Belgium.}, \\
Els Plettinckx\footnote{Els Plettinckx (els.plettinckx@sciensano.be) is a Principal Research Scientist at Department of Epidemiology and Public Health, Sciensano, Rue Juliette Wytsmanstraat, 14, Brussels 1050, Belgium. },\\ and Forrest W. Crawford\footnote{Forrest W. Crawford (forrest.crawford@yale.edu) is an Associate Professor of Biostatistics, Statistics \& Data Science, Operations, and Ecology \& Evolutionary Biology at Yale University, New Haven, CT, USA. } \\[1em]
} 
\begin{document}

\maketitle

\begin{abstract}
  \noindent Capture-recapture (CRC) surveys are used to estimate the size of a population whose members cannot be enumerated directly.  CRC surveys have been used to estimate the number of Covid-19 infections, people who use drugs, sex workers, conflict casualties, and trafficking victims.  When $k$ capture samples are obtained, counts of unit captures in subsets of samples are represented naturally by a $2^k$ contingency table in which one element -- the number of individuals appearing in none of the samples -- remains unobserved.  In the absence of additional assumptions, the population size is not identifiable (i.e. point-identified). Stringent assumptions about the dependence between samples are often used to achieve point-identification. However, real-world CRC surveys often use convenience samples in which the assumed dependence cannot be guaranteed, and population size estimates under these assumptions may lack empirical credibility. In this work, we apply the theory of partial identification to show that weak assumptions or qualitative knowledge about the nature of dependence between samples can be used to characterize a non-trivial confidence set for the true population size.  We construct confidence sets under bounds on pairwise capture probabilities using two methods: test inversion bootstrap confidence intervals, and profile likelihood confidence intervals.  Simulation results demonstrate well-calibrated confidence sets for each method.  In an extensive real-world study, we apply the new methodology to the problem of using heterogeneous survey data to estimate the number of people who inject drugs in Brussels, Belgium. \\[1em]
  \noindent \textbf{Keywords}: bootstrap, injection drug use, population size, profile likelihood, partial identification
  \end{abstract}

  \paragraph{Statement of Significance.} Capture-recapture surveys allow researchers to estimate the size of a population by measuring the overlap in at least two random samples from that population. This paper develops partial identification methodology to relax stringent dependence assumptions usually needed to obtain point identification in capture-recapture experiments.  Statistical dependence between samples can dramatically alter estimates of the size of the target population, but a fully parameterized model is not nonparametrically identifiable.  The purpose of this paper is to derive robust confidence intervals that can accommodate uncertainty in pairwise dependence between samples. The proposed method improves on traditional approaches, which must either assume certain dependence to be absent, or impose a prior distribution over dependence parameters. We have implemented open-source software for the proposed procedure in an R package for general CRC experiments.
  
  \section{Introduction}

  Estimating the size of a population is an important problem in demography, ecology, epidemiology, and public health research.  When the members of a population cannot be enumerated directly, probabilistic survey methods may be used to obtain statistical estimates of the population size.  Capture-recapture (CRC) surveys obtain several random samples from a population and record the number of unique individuals in each subset of samples. 
  Historically, CRC was first used in ecological studies, to monitor animal abundance and related demographic parameters \citep{seber1982estimation, williams2002analysis}. Recently, CRC surveys have been used in epidemiological studies to estimate the size of hidden or hard-to-reach populations, including undetected Covid-19 infections \citep{bohning2020estimating}, human trafficking and modern slavery \citep{silverman2020multiple}, men who have sex with men \citep{paz2011many}, sex workers \citep{kruse2003participatory}, people who inject drugs (PWID) \citep{hickman2009assessing}, methamphetamine users \citep{dombrowski2012estimating}, opiate users \citep{comiskey2001capture}, heroin users \citep{larson1994indirect}. CRC also has an important role in coverage evaluation studies for censuses and data integration/record linkage \citep{di2018population, aleshin2022multifile, manrique2022capture}.
  
  CRC analyses typically make four types of assumptions: 1) restrictions on \emph{inclusion dependence} between samples (e.g. when $k = 2$, it is assumed that the inclusion in one sample is independent of the inclusion in the other sample.) \citep{otis1978statistical, pollock1991review, agresti1994simple, chao2001overview}; 2) closed-population assumptions in which the population is assumed to be static in size and composition during the period of investigation, i.e., the effects of mortality, migration, and recruitment are negligible \citep{seber1982estimation}; 3) homogeneous capture probability, which means the probability of being captured in a certain sample is the same for each individual in the population; 4) distinguishability in captures, i.e. individuals are correctly identified between captures.

  Dependence assumptions in CRC studies are especially important because of identifiability issues. 
  Intuitively, a parameter is identifiable if it is theoretically possible to learn its true value after obtaining an infinite number of observations. 
  For non-hidden populations in survey studies with clear sampling frames, the (in)dependence structures may be known by design. However, for many CRC studies of hidden populations, investigators may not have precise prior knowledge about independence or dependence in samples. 
  In particular, when $k$ capture samples are obtained, counts of units captured in subsets of samples are represented naturally by a $2^k$ contingency table in which one element -- the number of individuals appearing in none of the samples -- remains unobserved \citep{fienberg1972multiple}.  Because the missing element can take any non-negative integer value,  the population size is not identifiable (i.e. point-identified) in the absence of additional inclusion dependence assumptions. The traditional CRC theory based on log-linear models \citep{bishop2007discrete, cormack1989log} is the most frequently used method for CRC in social sciences (e.g. \citep{hay2016estimating, xu2014estimating, kimber2008estimating, jones2016problem}). A full log-linear model has $2^k$ parameters, which is unidentified. To achieve point-identification, one or more parameters are usually assumed to be $0$, leading to specific \emph{inclusion dependence structures} among samples.

  It is widely known that misspecification of dependence in CRC samples may result in biased estimates of population size \citep[e.g.][]{tilling2001capture}.  Researchers have attempted to describe and deal with issues of dependence from different perspectives. Because a fully specified model with $2^k$ parameters is not identifiable, researchers have explored ways of modeling or assessing sensitivity to unknown dependence. \citet{hook2000accuracy} show that when $k>2$ surveys are available, ``internal validity analysis" can be conducted by comparing estimates under the full $k$ samples with those generated from all combinations of $k-1$ surveys. \citet{baffour2013investigation} investigate how the number of surveys used in CRC impacts bias in population size estimates. \citet{wolter1990capture}, \citet{bell1993using} and \citet{das2021doubly} take advantage of measured covariates in CRC surveys, or external information about population characteristics, e.g. estimated sex ratio from other demographic surveys. A simulation and sensitivity analysis approach has also been adopted \citep{brown1999methodological, brown2006dependence, gerritse2015sensitivity, aleshin2021revisiting}. 
  For example, in a saturated model with $2^k-1$ free parameters, \citet{gerritse2015sensitivity} fixes the $k$-way interaction (highest-order dependence) parameter at a given value so that all the rest of the model parameters are point-identified, and then, varies this chosen parameter to investigate its impact on the population size estimates. \citet{aleshin2021revisiting} describe a Bayesian approach for sensitivity analysis by imposing a prior distribution over unknown dependence parameters. 
  
  Where might additional information about dependence in samples come from? Often researchers have access to qualitative information about the \emph{pairwise} dependence structure of the target population, but rarely about higher-order dependence parameters. For example, two respondent-driven sampling (RDS) \citep{heckathorn1997respondent, crawford2018hidden, yauck2022population} samples from the same target population may start with similar sets of seeds, leading to positive dependence of inclusion. Likewise, administrative lists of individuals who interact with a medical clinic, social service provider, or law enforcement entity are sometimes used as samples in CRC studies.  But membership on these lists may not be independent: an individual who seeks medical care may be more likely to also seek social services, or be less likely to be arrested.  Alternatively, membership on a given list may preclude membership in another list. For example, clinics may serve non-overlapping groups of clientele, excluding patients from neighboring catchment areas, thereby inducing negative correlation in study capture indicators. When capture samples involve a sequence of in-person visits or interviews, subjects \citep[e.g. sex workers][]{kimani2013enumeration} included in the first visit may be more likely to be included in the second visit, due to familiarity and trust with interviewers. In addition, geographically disparate samples may be negatively dependent.  Because a fully parameterized model with $2^k$ unknowns is not identifiable, researchers must either make unverifiable assumptions, or use available auxiliary information about dependence. 
  
  In this paper, we provide a rigorous frequentist statistical framework for estimating population sizes when the dependence structures among samples can vary over a wide range, characterized by weak information about pairwise dependence.  In particular, we do not assume no $k$-way interactions among all the $k$ captures as those used in traditional hierarchical log-linear models. By specifying one or more bounds on odds ratios for pairwise sample inclusion in a CRC study, we show how to estimate intervals that contain the target population size with high probability, without the need to specify a prior distribution over dependence parameters. Investigators need only specify one or more bounds on pairwise sample dependence to use the proposed method. 
  Our approach uses ideas from the newly developed theory of \emph{partial identification} \citep{manski2003partial, tamer2010partial, molinari2020microeconometrics} to conduct statistical inference in two ways: by introducing test inversion bootstrap confidence intervals and profile likelihood confidence intervals. Here we focus on confidence intervals instead of point estimates because the population size itself is only partially identified, so no consistent point estimator exists without additional assumptions \citep{lewbel2019identification}. Because we make no distributional assumptions about unidentified dependence parameters, the approach is distinct from Bayesian methods that require the specification of a prior distribution over these parameters. 
  We have implemented an open-source R package (See Supplementary Appendix) for the proposed procedure with detailed documentation for general $k$-sample CRC experiments and pairwise restrictions of flexible forms and amounts.

  \section{Motivating application}
  
  This work is motivated by the empirical problem of estimating the number of people who inject drugs (PWID) in Brussels, Belgium. Knowing the size of this hidden population is vital to government and non-governmental organizations that provide services to PWID, including drug treatment services and harm reduction programs like syringe exchange.  We apply the new partial identification methodology using three samples of PWID collected by \citet{plettinckx2020estimates}: participants in an RDS fieldwork study, clients of a crisis intervention center and shelter, and participants at a low-threshold drug treatment center. A total of 306 unique individuals were sampled across three of the studies. The counts of individuals in each sample subset are shown in Figure \ref{fig:pwid1}.  
  
  In samples 2 and 3, information was obtained from registration systems, while sample 1 arises from an RDS study starting with a small number of seeds selected from low-threshold treatment centers or syringe exchange services \citep{van2020prevalence}. 
  The three samples are thus likely to exhibit positive pairwise dependence.  
  
  The remainder of this paper describes a statistical methodology for weak substantive knowledge about the nature of pairwise dependence between samples to compute confidence intervals for the hidden population size.  Using this new methodology, we present dependence-robust interval estimates for the number of PWID in Brussels in Section \ref{sec:app}.

  \begin{figure}
    \centering
    \includegraphics[width=0.5\textwidth]{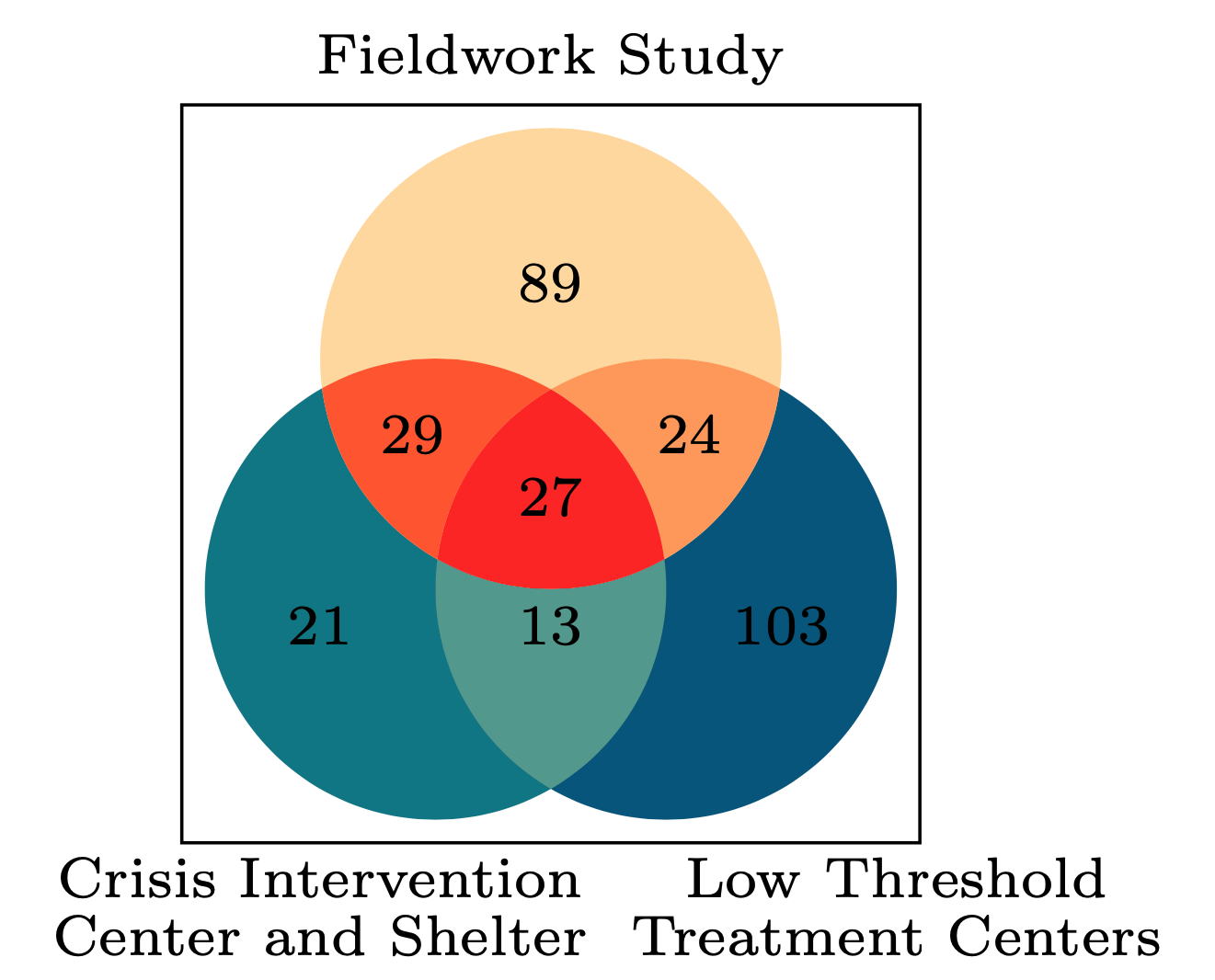}
    \caption{Illustration of data from three semi-overlapping samples of people who inject drugs in Brussels, Belgium \citep{plettinckx2020estimates}. Samples from low threshold drug treatment centers (``MSOC/MASS and Projet Lama") and a crisis intervention center and shelter (``Transit asbl") are obtained from registration systems. The sample from the fieldwork study was collected with RDS. A total of 306 unique PWID were sampled. }
    \label{fig:pwid1}
  \end{figure}


\section{Setting}

Consider a population consisting of $M$ distinguishable units. In this paper, we assume that the samples come from a closed population, capture probabilities are homogeneous within samples, and that sampled units are distinguishable so they can be matched between samples.
We obtain $k$ possibly dependent samples and record the number of units observed to fall within each of the $2^k$ subsets of samples.  Label these subsets $i=0,\ldots,c$, where $c=2^k-1$; the subset corresponding to label $i=0$ is the units not appearing in any sample, and the subset corresponding to $i=c$ is the units appearing in all $k$ samples.  Denote random variables $N_i$ as the number of units in the subset $i$, for $i = 0, \ldots, c$. Define the ``capture history'' of units in subset $i$ as a vector of sample indicators $\s_i \equiv (s_{i1}, \ldots, s_{ik})$. Define $\x_i \equiv  (x_{i0}, x_{i1}, \ldots, x_{ic})$ as a vector of $2^k$ elements mapping the capture history of subset $i$ to fixed effects in a model of $2^k$ parameters. 
Figure \ref{fig:illustration} shows an example of notation for $k=3$ capture samples, the values for $\s_i, \x_i$, and the incomplete contingency table representation.

\colorlet{myred}{red!40}
\definecolor{ione}{rgb}{0.00,0.38,0.53}
\definecolor{itwo}{rgb}{0.00,0.51,0.56}
\definecolor{ithree}{rgb}{0.36,0.64,0.59}
\definecolor{ifour}{rgb}{1.00,0.86,0.65}
\definecolor{ifive}{rgb}{1.00,0.63,0.37}
\definecolor{isix}{rgb}{1.00,0.36,0.16}
\definecolor{iseven}{rgb}{1.00,0.12,0.10}
    
\begin{figure}
\begin{minipage}{0.45\textwidth}
\centering
\includegraphics[width=0.7\textwidth]{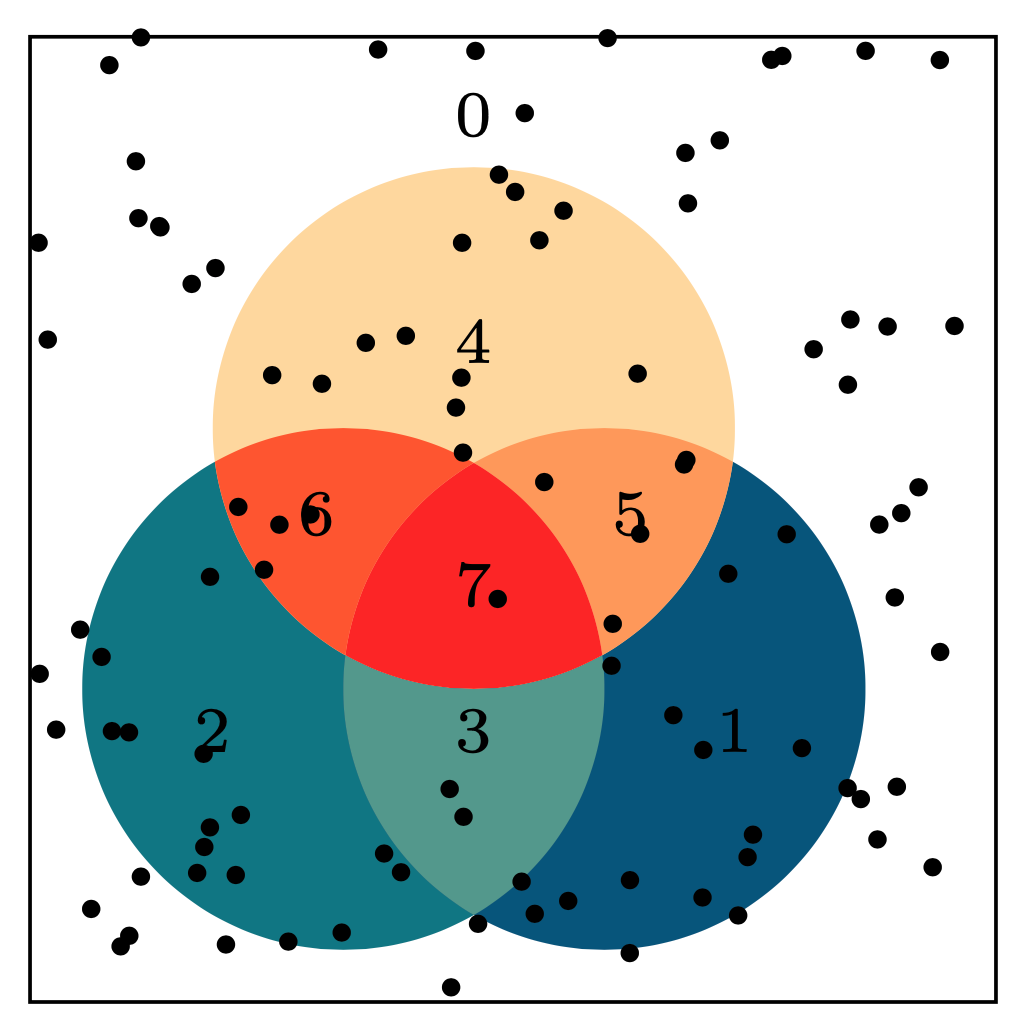}
\end{minipage}
\begin{minipage}{0.45\textwidth}
  \centering
    \begin{tabular}{cccc}   
    \toprule
        $i$ & $\s_i$ & $\x_i$ & $N_i$ \\
       \midrule
       0 & (0,0,0) & (1,0,0,0,0,0,0,0) & $N_0 = 49$ \\
       \rowcolor{ione}
       1 & (0,0,1) & (1,0,0,1,0,0,0,0) & $N_1 = 13$\\
       \rowcolor{itwo}
       2 & (0,1,0) & (1,0,1,0,0,0,0,0) & $N_2 = 14$\\
       \rowcolor{ithree}
       3 & (0,1,1) & (1,0,1,1,0,0,1,0) & $N_3 = 2$\\
       \rowcolor{ifour}
       4 & (1,0,0) & (1,1,0,0,0,0,0,0) & $N_4 = 13$\\
       \rowcolor{ifive}
       5 & (1,0,1) & (1,1,0,1,0,1,0,0) & $N_5 = 4$\\
       \rowcolor{isix}
       6 & (1,1,0) & (1,1,1,0,1,0,0,0) & $N_6 = 4$\\
       \rowcolor{iseven}
       7 & (1,1,1) & (1,1,1,1,1,1,1,1) & $N_7 = 1$\\
       \bottomrule \\
    \end{tabular}
  \end{minipage}

\begin{minipage}{1\textwidth}
\centering
\begin{tabular}{@{}lcccc@{}}
\toprule
\multirow{2}[3]{*}{Count} & \multicolumn{2}{c}{Not in sample 3} & \multicolumn{2}{c}{In sample 3} \\
\cmidrule(lr){2-3} \cmidrule(lr){4-5}
 & Not in sample 1 & In sample 1 & Not in sample 1 & In sample 1 \\
\midrule
Not in sample 2  & Missing & $N_4$ & $N_1$ & $N_5$ \\
In sample 2 & $N_2$ & $N_6$ & $N_3$ & $N_7$ \\
\bottomrule
\end{tabular}
\end{minipage}
\caption{Illustration of notation for a capture-recapture survey with $k=3$ samples.  At the upper left, a Venn diagram of samples shows disjoint subsets labeled by $i=0,\ldots,2^3-1$. Each dot represents an individual in the target population, for a total of $M=100$ units. At upper right, each subset $i$ corresponds to a $k$-vector of binary capture indicators $\s_i$, a $2^k$ design vector $\x_i$, and the count of individuals in subset $i$, $N_i$.  The table at the bottom shows the $2^k$ incomplete contingency table representation of CRC when $k = 3$.  The $N_0$ element, representing the units not captured in any of the three samples, is missing. }
\label{fig:illustration}
\end{figure}

For each subset $i$, let $m_i = \E[N_i]$ where expectation is defined with respect to the sampling design for the $k$ samples.  To describe the relationship between the $r$th and $t$th samples, where $r, t \in \{1,\ldots,k\}, r \neq t$, we define 
$ N_{d_1 d_2} (r, t) = \sum_{i= 1}^c N_i \mathbbm{1} \lbrace \boldsymbol{ s }_{ir} = d_1 \rbrace \mathbbm{1} \lbrace \boldsymbol{ s }_{it} = d_2  \rbrace, $ 
where $d_1, d_2 \in \{0, 1\}$. Then $N_{00} (r, t), N_{10} (r, t), N_{11} (r, t)$ are the counts of observed individuals who appear in neither sample $r$ nor sample $t$, in sample $r$ but not in sample $t$, and in both $r$ and $t$ respectively. Define the expected value 
$ m_{d_1 d_2} (r, t) \equiv \E\paren{N_{d_1 d_2} (r, t)} = \sum_{i= 1}^c m_i \mathbbm{1} \lbrace \boldsymbol{ s }_{ir} = d_1 \rbrace \mathbbm{1} \lbrace \boldsymbol{ s }_{it} = d_2  \rbrace, $
and $p_{d_1 d_2} (r, t) \equiv m_{d_1 d_2} (r, t)/M$.  Then the odds ratio (OR) for the capture probabilities in samples $r$ and $t$ is 
$ \OR_{rt} \equiv \frac{p_{11} (r, t)/p_{01} (r, t)}{p_{10} (r, t)/(p_{00} (r, t) + m_0/M)} = \frac{m_{11}(r,t)(m_{00}(r,t) + m_0)}{m_{10}(r,t)m_{01}(r,t)}. $
For example, when $k = 2$, $m_{00}(1,2) = 0$ and  
$ \OR_{1,2} = \frac{m_{11}(1,2)m_{0}}{m_{10}(1,2)m_{01}(1,2)}.$
We will assume throughout that $N_i \sim \text{Poisson}(m_i), i = 0, \ldots, c$ independently, and the population size is $M = \sum_{i=0}^c m_i$. This model is often called the ``Poisson model" \citep{cormack1979models, jolly1979unified}.

\section{Methods}

A CRC experiment with $k$ samples is a realization of $(N_1,\ldots,N_c)$. Suppose that $n$ identically and independently distributed contingency tables are available from repeated CRC studies for the same target population under the same sampling design.  Denote data in the $\ell$-th such table as $\mathbf{N}_\ell = (N_{\ell 1}, \ldots, N_{\ell c})$, for $\ell = 1,\ldots, n$. Define the average occupancy of the $i$th subset as $\bar{N}_{(n)i} = \frac{1}{n}\sum_{\ell = 1}^n N_{\ell i}$.

To construct frequentist confidence intervals, we employ an asymptotic regime in which the number of sampled units $n \rightarrow \infty$. 
In this ideal case, $M$ is still unidentified, even though we will have perfect knowledge of $\left( m_1, \ldots,  m_c \right)$, whose sum provides a lower bound for $M$. When qualitative/domain knowledge about CRC experiments is available, bounds on $M$ will be more informative. However, it may not be point-identified especially when the domain knowledge is inadequate. Making rigorous statistical inference for the population size which is possibly partially identified is the major methodological challenge in this work. In the following, we first formally define partial identification. Then, we develop statistical inference results under the restrictions on pairwise dependence using partial identification theory. 

\subsection{Partial identification}

Let $\mathbf{m} = (m_1, \ldots, m_c)$ be the mean occupancies of each CRC subset, and let $M= m_0 +\sum_{i=1}^c m_i$.  The model parameter vector is $\theta = (M, \mathbf{m}) \in \Theta$, where $\Theta \subseteq \mathbb{R}_+^{c+1}$ is the model parameter space.  Under the Poisson model, $\mathbf{N}_\ell\sim P \in \mathcal{P} = \lbrace P_{\theta}: \theta \in \Theta \rbrace$ i.i.d.  The nonempty model $\Theta$ is usually defined through identification assumptions that are formed by empirical knowledge, as we discuss below.  For a given $\theta\in\Theta$, the probability mass function of $\mathbf{N}_\ell$, as the sample criterion function, is 
$ p_{\theta}(\mathbf{N}_\ell) = \prod_{i = 1}^{c} \frac{e^{-m_i}m_i^{N_{li}}}{N_{li}!}. $ 
Under the true data generating process $P = P_{\theta^{\ast}}$, where $\theta^{\ast} = (M^{\ast}, \mathbf{m}^{\ast})$ is the true parameter vector, define the population criterion function $L: \Theta \rightarrow \mathbb{R}$ as 
$ L(\theta) = \sum_{i = 1}^c -m_i + m_i^{\ast}\log m_i ,$
which equals $\E_{P}\left[{\log p_{\theta}(\mathbf{N}_\ell)} \right]$ up to a constant. 

The identification set $\Theta_I(P)$ for $\theta$ is the set of maximizers of $L$,
$ \Theta_I(P) = \lbrace \theta \in \Theta: L(\theta) = \sup_{\nu  \in \Theta} L(\nu) \rbrace .$
Define the identification set $M_I(P)$ for the parameter of interest $M$ as 
$ M_I(P) = \lbrace M: (M, \m) \in \Theta_I(P) \text{ for some } \m \rbrace$,
which is the projection of $\Theta_I(P)$ on the axis of $M$.
When $M_I(P)$ contains only one element (i.e. $M_I = \lbrace M^{\ast} \rbrace$), the population size $M$ is \emph{point-identified}. When $\lbrace M^\ast\rbrace \subsetneq M_I \subsetneq \mathbb{R}_+$, the population size $M$ is \emph{partially identified}. Usually, when additional assumptions are imposed, the size of the identification set $M_I$ will shrink accordingly. Before we proceed, we make three regularity assumptions to ensure that the identification set is non-trivial and well-defined.   

\begin{assum}[Feasibility]
\label{assum:feasibility}
The parameter space $\Theta \subseteq \R_{+}^{c+1}$ is non-empty.
\end{assum}
\begin{assum}[Compactness]
\label{assum:compactness}
The parameter space $\Theta$ is closed and bounded away from 0 and $\infty$.
\end{assum}
\begin{assum}[Correctness]
\label{assum:correctness}
The true parameter $\theta^\ast \in \Theta$.
\end{assum}

Searching for the true model parameter vector in a null set is meaningless, so Assumption \ref{assum:feasibility} requires that the practitioners verify the non-emptiness of $\Theta$ after specifying it. 
Assumption \ref{assum:compactness} is realistic in empirical studies of large finite populations in which not every individual in the population is sampled. This is a technical condition that is primarily used in proofs. Assumption \ref{assum:correctness} requires that the true mean occupancy of each sample subset, as well as the true population size, is an element of the parameter space $\Theta$. 

Researchers often have qualitative knowledge about dependence among samples, and we express this knowledge in the form of bounds on the dependence between pairs of samples. In the following, we use the odds ratio (OR) between samples to quantify pairwise dependence. Note that with observable data, to get nontrivial bounds on the population size, researchers only need to know some but not necessarily all pairwise dependence relationships, and all the higher-order dependence among samples are left unspecified for robustness. Intuitively, the specification of one or more pairwise dependence relationships, along with observed elements of the $k$-way contingency table, impose shape constraints on the dependence structure between samples.  These constraints meaningfully constrain the set of possible population sizes, based on which we apply partial identification methodology to construct confidence intervals for the population size.

Suppose that we know the dependence between samples $r$ and $t$, and believe that 
\[ \OR_{rt} = \frac{m_{11}(r,t)(m_{00}(r,t) + m_0)}{m_{10}(r,t)m_{01}(r,t)} \in [\eta, \xi ]. \] 
Since $M = m_0 + \sum_{d_1, d_2} m_{d_1 d_2} (r, t)$, we obtain restrictions of the form 
\[ \eta \le \frac {m_{11}(r,t) \left[ M - m_{10}(r,t) - m_{01}(r,t) - m_{11}(r,t)\right]}{m_{10}(r,t)m_{01}(r,t)} \le \xi , \]
where $0\le\eta<\xi$. Suppose we have dependence restrictions on $\omega$ distinct pairs of samples, where $\omega\le \binom{k}{2}$, and denote these pairs as $\paren{r_j, t_j}$, and their restrictions as $\paren{\eta_{j}, \xi_{j}}$, $ j = 1, \ldots, \omega .$ Contradictory OR conditions are excluded by Assumption  1 (Feasibility). Violations of feasibility can be detected by checking whether there exist $m_{i}$'s that obey all the OR inequalities. In practice, when using our R package, warning and error messages would be generated if pairwise ORs become contradictory.

Note that $\omega$ only depends on the available domain knowledge of pairwise dependence and may only grow slowly as $k$ grows. Therefore, the proposed approach is scalable when the number of captures increases. 
Define $m_{d_1 d_2}^{(j)} \equiv m_{d_1 d_2}(r_j, t_j) $.
The following result describes the identification set for the population size $M^*$ under restrictions on pairwise dependence.

\begin{lem}[Identification set of $M^*$ under restrictions on pairwise dependence]
\label{lem:Theta_pair}
Given pairwise restrictions $\paren{r_j, t_j, \eta_{j}, \xi_{j}}_{j = 1}^{\omega}$, the model space $\mathcal{P}$ has corresponding parameter space
\begin{equation}
\label{eq:Theta_pair}
\begin{split}
\Theta =\Bigg\{ & (M, \m) \in \mathbb{R}_+^{c+1}: \eta_j \le \frac {m_{11}^{(j)} \left[ M - m_{10}^{(j)}  - m_{01}^{(j)}  - m_{11}^{(j)} \right]}{m_{10}^{(j)} m_{01}^{(j)} } \le \xi_j, \text{ for } j=1,\ldots,\omega   \Bigg\} .
\end{split}
\end{equation}
When the true parameter vector is $\theta^\ast = (\m^\ast, M^\ast)$, define $m_{d_1 d_2}^{\ast (j)} = \sum_{i= 1}^c m_i^{\ast} \mathbbm{1} \lbrace { s }_{ir_j} = d_1 \rbrace \mathbbm{1} \lbrace { s }_{it_j} = d_2  \rbrace $. Then the identification set for $M^*$ is 
\begin{equation}
\label{eq:MIP_pair}
\begin{split}
M_I(P) = \Bigg[ 
& \max_{j \in \{1, \ldots, \omega \}} \left\lbrace \eta_j \frac{m_{10}^{\ast (j)}m_{01}^{\ast (j)}}{m_{11}^{\ast (j)}} + m_{10}^{\ast (j)} + m_{01}^{\ast (j)} + m_{11}^{\ast (j)}
 \right\rbrace, \\
& \min_{j \in \{1, \ldots, \omega \}}  \left\lbrace \xi_j \frac{m_{10}^{\ast (j)}m_{01}^{\ast (j)}}{m_{11}^{\ast (j)}} + m_{10}^{\ast (j)} + m_{01}^{\ast (j)} + m_{11}^{\ast (j)} \right\rbrace
\Bigg] .
\end{split}
\end{equation}
\end{lem}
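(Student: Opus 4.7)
The plan has two parts, mirroring the structure of Lemma \ref{lem:Theta_high}: first verify the stated form of $\Theta$, then derive $M_I(P)$ by maximizing the population criterion $L$.

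First, I would establish \eqref{eq:Theta_pair}. The only nonobvious step is rewriting the odds-ratio constraint. Since the four cells $m_{d_1d_2}(r_j,t_j)$ partition the individuals observed in at least one sample together with $m_0$, we have $M = m_0 + \sum_{d_1,d_2}m_{d_1d_2}(r_j,t_j)$, so $m_{00}^{(j)} + m_0 = M - m_{10}^{(j)} - m_{01}^{(j)} - m_{11}^{(j)}$. Substituting into $\OR_{r_jt_j}\in[\eta_j,\xi_j]$ yields exactly the constraint in \eqref{eq:Theta_pair}. Assumption \ref{assum:feasibility} ensures that the intersection of these $\omega$ constraints is nonempty.

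Next, I would compute $M_I(P)$. The key observation is that $L(\theta)=\sum_{i=1}^c(-m_i + m_i^{\ast}\log m_i)$ depends only on $(m_1,\ldots,m_c)$, not on $m_0$ or $M$. Each summand $-m_i + m_i^{\ast}\log m_i$ is strictly concave in $m_i$ with unique maximizer $m_i = m_i^{\ast}$, so by Assumption \ref{assum:correctness} the unconstrained maximum is attained at $\m = \m^{\ast}$, which lies in $\Theta$. Hence every maximizer of $L$ on $\Theta$ must satisfy $\m = \m^{\ast}$, and
\[
\Theta_I(P) = \bigl\{(M,\m^{\ast}) : (M,\m^{\ast}) \in \Theta\bigr\}.
\]
Consequently, $M_I(P)$ is the set of $M$ such that the constraint in \eqref{eq:Theta_pair} holds with $m_{d_1d_2}^{(j)}$ replaced by $m_{d_1d_2}^{\ast(j)}$.

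Finally, solving the $j$-th constraint for $M$ gives the interval
\[
\eta_j\frac{m_{10}^{\ast(j)}m_{01}^{\ast(j)}}{m_{11}^{\ast(j)}} + m_{10}^{\ast(j)} + m_{01}^{\ast(j)} + m_{11}^{\ast(j)} \;\le\; M \;\le\; \xi_j\frac{m_{10}^{\ast(j)}m_{01}^{\ast(j)}}{m_{11}^{\ast(j)}} + m_{10}^{\ast(j)} + m_{01}^{\ast(j)} + m_{11}^{\ast(j)}.
\]
Intersecting over $j=1,\ldots,\omega$ yields the stated expression \eqref{eq:MIP_pair}, with the lower endpoint being the maximum of the $\omega$ lower bounds and the upper endpoint the minimum of the $\omega$ upper bounds. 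The main delicate point is the uniqueness of the maximizer $\m = \m^{\ast}$, which rests on the strict concavity of each summand and on $m_i^{\ast}>0$ (guaranteed by compactness in Assumption \ref{assum:compactness}); without this, the projection onto the $M$-coordinate could potentially be larger than claimed.
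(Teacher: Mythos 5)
Your proof is correct and follows essentially the same route as the paper's: substitute $m_0+m_{00}^{(j)}=M-m_{10}^{(j)}-m_{01}^{(j)}-m_{11}^{(j)}$ into the odds-ratio bounds to get \eqref{eq:Theta_pair}, then read off $M_I(P)$ by evaluating the constraints at $\m=\m^{\ast}$ and intersecting over $j$. The paper's own proof is far terser and simply cites Assumptions \ref{assum:feasibility} and \ref{assum:correctness}; your explicit strict-concavity argument showing that every maximizer of $L$ over $\Theta$ has $\m=\m^{\ast}$ (so that $M_I(P)$ is exactly the $\m^{\ast}$-slice of $\Theta$) fills in the step the paper leaves implicit, and is a welcome addition rather than a deviation.
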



\subsection{Dependence-robust interval estimates}

We present two frequentist methods to construct confidence intervals for the true population size $M^*$ that accommodate weak assumptions about the nature of dependence between samples. The first relies on bootstrap techniques \citep{efron1994introduction} (test inversion bootstrap confidence interval), and the second relies on the properties of profile likelihood ratio statistics (profile likelihood confidence interval) \citep{wilks1938large}. Note that when a statistical model is possibly partially identified instead of point identified, point estimation of a target parameter is not well defined, and thus omitted in our work. We describe the test inversion bootstrap method below, and provide details of profile likelihood confidence intervals in the \emph{Supplementary Appendix}.

\subsubsection{Definition and algorithm}

Our goal is to construct a confidence interval by defining hypothesis tests that depend on $M$, such that the values of $M$ for which the corresponding null hypothesis is rejected will be excluded from the confidence interval. We first use moment inequalities to define the identification set of the true population size $M^*$, and then consider the problem of testing a finite number of moment inequalities. We then invert the test to obtain the confidence interval.  We establish results using the two-step procedure proposed by \citet{romano2014practical}, which has the advantages of controlling the size of the tests uniformly, and remaining computationally feasible when the number of moments is large.  Bootstrap resampling is used to compute critical values for statistical tests. The critical values are a function of the unknown true distribution $P$ of $\mathbf{N}_{\ell}$, and are therefore usually unknown. The basic idea behind the bootstrap approach is that it uses a reasonable approximation to the distribution $P$ to compute critical values. 

Suppose we observe $n$ identically and independently distributed contingency tables, 
$\mathbf{N}_1, \allowbreak \ldots, \allowbreak \mathbf{N}_n \allowbreak  \sim  \allowbreak P \in  \allowbreak \mathcal{P}$. To develop the methodology, consider functionals 
\[  W_{\ell}  = \mathbf{g}\paren{\mathbf{N}_\ell, M} = (g_{1}\paren{\mathbf{N}_\ell, M}, \ldots, g_{\rho}(\mathbf{N}_\ell, M)) \in \mathbb{R}^\rho \]
of the observed data $\mathbf{N}_\ell$ and a given $M$, such that
$ M_I(P)$ is equal to $\lbrace M \in \R_+:\ \E_P[\mathbf{g}\paren{\mathbf{N}_\ell, M} ] \preceq 0 \rbrace, $
where $\E_P[\mathbf{g}\paren{\mathbf{N}_\ell, M} ] \preceq 0$ is called ``moment inequalities''.

We will consider tests of the null hypotheses 
$ H_M: \E_P[\mathbf{g}\paren{\mathbf{N}_\ell, M} ] \preceq 0 ,$ 
that control the probability of a Type I error at level $\alpha$. To illustrate the construction of a confidence set for the true population size $M^*$, we describe the test inversion bootstrap procedure generically in detail in the \emph{Supplementary Appendix}. In short, we formally define the test as
\begin{equation}
\label{eq:phi_n}
\phi_{n}(M, \alpha, \beta)=\left( 1-\mathbbm{1}\left\{Q_{n}(1-\beta, M) \subseteq \mathbb{R}_{-}^{\rho}\right\} \right) \left(1 - \mathbbm{1}\left\{T_{n} \leq \hat{\tau}_{n}(1-\alpha+\beta, M)\right\} \right),
\end{equation}
where $M$ is chosen at the beginning of the algorithm. ($Q_{n}(1-\beta, M), T_n, \hat{\tau}_{n}(1-\alpha+\beta, M)$ are defined formally in the \emph{Supplementary Appendix}). In the implementation, we enumerate $M$ on an arbitrarily fine grid of the positive real line. 
Equation (\ref{eq:phi_n}) states that if either the $1-\beta$ confidence region of $\E_P[\mathbf{g}\paren{\mathbf{N}_\ell, M} ]$, i.e. $Q_{n}(1-\beta, M)$, is a subset of $\mathbb{R}_{-}^{\rho}$, or the test statistics $T_n$ is less than or equal to the critical value $\hat{\tau}_{n}(1-\alpha+\beta, M)$, then we will fail to reject the null hypothesis $H_M$, and therefore this $M$ will remain in the confidence interval. Then, we define the test inversion bootstrap confidence interval as follows:
\begin{defn}[Test Inversion Bootstrap Confidence Interval]
  \label{defn:tibci}
  Fix $0 < \alpha \le 1$, choose any $0< \beta < \alpha$, and let $\phi_{n}(M, \alpha, \beta)$ be defined in \eqref{eq:phi_n}. The test inversion bootstrap (TIB) confidence interval is defined as
  \begin{equation}
  \label{eq:tibci}
    CI_{\text{TIB}}^{n,\alpha}=\{ M \in \R_+ : \phi_{n}(M, \alpha, \beta)=0 \}.
  \end{equation} 
\end{defn}

\subsubsection{TIB confidence intervals under restrictions on pairwise dependence}

Next, we only need to find proper $\mathbf{g}\paren{\mathbf{N}_\ell,  M}$ such that $M_I(P)$ in Equation \eqref{eq:MIP_pair} is equal to $\lbrace M \in \R_+:\ \E_P[\mathbf{g}\paren{\mathbf{N}_\ell, M} ] \preceq 0 \rbrace$.
Let $N_{\ell d_1 d_2}^{(j)} = \sum_{i= 1}^c N_{\ell i} \mathbbm{1} \lbrace { s }_{ir_j} = d_1 \rbrace \mathbbm{1} \lbrace { s }_{it_j} = d_2  \rbrace$, where $d_1, d_2 \in \{0, 1\}$.  Define
\[g_{j1}\paren{\mathbf{N}_\ell,  M} =  - (N_{\ell 11}^{(j)})^2  - N_{\ell 10}^{(j)} N_{\ell 11}^{(j)}  -N_{\ell 01}^{(j)} N_{\ell 11}^{(j)}   -  \xi_{j}N_{\ell 10}^{(j)} N_{\ell 01}^{(j)}  + N_{\ell 11}^{(j)}  +  M N_{\ell 11}^{(j)},\] 
\[g_{j2}\paren{\mathbf{N}_\ell,  M} =     (N_{\ell 11}^{(j)}))^2  + N_{\ell 10}^{(j)} N_{\ell 11}^{(j)}  +N_{\ell 01}^{(j)} N_{\ell 11}^{(j)}   + \eta_{j}N_{\ell 10}^{(j)} N_{\ell 01}^{(j)}  - N_{\ell 11}^{(j)}  -  M N_{\ell 11}^{(j)},\] and \begin{equation}
  W_{\ell}  = \mathbf{g}\paren{\mathbf{N}_\ell,  M}  = \paren{g_{11}\paren{\mathbf{N}_\ell,  M}, g_{12}\paren{\mathbf{N}_\ell,  M}, \ldots, g_{\omega 1}\paren{\mathbf{N}_\ell,  M}, g_{\omega 2}\paren{\mathbf{N}_\ell,  M}} \in \mathbb{R}^{2\omega},
  \label{eq:pair_g}
\end{equation}
which will serve our purposes for moment inequalities. Note that $\mathbf{g}\paren{\mathbf{N}_\ell,  M}$ above is able to deal with possible zero values of $N_{\ell d_1 d_2}^{(j)}$.

\begin{lem}[Moment inequality characterization of the identification set]
  \label{lem:pair_mi}
  Given pairwise restrictions $\paren{r_j, t_j, \eta_{j}, \xi_{j}}_{j = 1}^{\omega}$, $\mathbf{g}\paren{\mathbf{N}_\ell, M} $ defined in \eqref{eq:pair_g} and the parameter space taking the form as in \eqref{eq:Theta_pair}, we have
  \[M_I(P) = \lbrace M \in \R_+:  \E_P\paren{  \mathbf{g}\paren{\mathbf{N}_\ell, M}} \preceq \mathbf{0} \rbrace .\]
\end{lem}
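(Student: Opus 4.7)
The plan is to directly compute $\E_P[\mathbf{g}(\mathbf{N}_\ell, M)]$ componentwise and verify that the two inequalities $\E_P[g_{j1}] \leq 0$ and $\E_P[g_{j2}] \leq 0$ reproduce the pairwise odds-ratio bounds appearing in the definition of $\Theta$ in \eqref{eq:Theta_pair}. Combined with Assumption \ref{assum:correctness}, this will identify $\lbrace M \in \R_+ : \E_P[\mathbf{g}(\mathbf{N}_\ell, M)] \preceq \mathbf{0}\rbrace$ with $M_I(P)$.

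The key observation is a Poisson independence argument. Under the log-linear Poisson model, $N_{\ell 1}, \ldots, N_{\ell c}$ are independent, so the aggregated counts $N_{\ell 11}^{(j)}, N_{\ell 10}^{(j)}, N_{\ell 01}^{(j)}$ --- being sums of disjoint subsets of the $N_{\ell i}$ --- are mutually independent. This lets me factor the cross-product expectations: $\E_P[N_{\ell 10}^{(j)} N_{\ell 11}^{(j)}] = m_{10}^{*(j)} m_{11}^{*(j)}$, and similarly for the other mixed terms. For the squared term I use $\E_P[(N_{\ell 11}^{(j)})^2] = m_{11}^{*(j)} + (m_{11}^{*(j)})^2$, since $N_{\ell 11}^{(j)}$ is itself Poisson with mean $m_{11}^{*(j)}$.

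Applying this to $g_{j1}$, the linear term $+N_{\ell 11}^{(j)}$ cancels the ``$+m_{11}^{*(j)}$'' piece coming from the variance of $(N_{\ell 11}^{(j)})^2$, and after collecting terms I obtain
\begin{equation*}
\E_P[g_{j1}(\mathbf{N}_\ell, M)] = m_{11}^{*(j)}\bigl[M - m_{10}^{*(j)} - m_{01}^{*(j)} - m_{11}^{*(j)}\bigr] - \xi_j\, m_{10}^{*(j)} m_{01}^{*(j)}.
\end{equation*}
Dividing through by the (positive) quantity $m_{10}^{*(j)} m_{01}^{*(j)}$, the inequality $\E_P[g_{j1}] \leq 0$ becomes exactly the upper bound in \eqref{eq:Theta_pair}. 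A parallel computation for $g_{j2}$ gives the negation of the same expression but with $\eta_j$ in place of $\xi_j$, so $\E_P[g_{j2}] \leq 0$ reproduces the lower bound. Assumption \ref{assum:correctness} ensures $\m^\ast \in \Theta$, so we replace generic $m_{d_1 d_2}^{(j)}$ by the true $m_{d_1 d_2}^{*(j)}$, concluding that the set of $M$ satisfying all $2\omega$ moment inequalities is precisely $M_I(P)$ as in \eqref{eq:MIP_pair}.

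I do not anticipate a significant obstacle here; the whole argument is algebraic once the Poisson independence is invoked, and is structurally identical to the proof of Lemma \ref{lem:high_mi}. The only point to handle carefully is the extra $+N_{\ell 11}^{(j)}$ correction terms in \eqref{eq:pair_gj1}--\eqref{eq:pair_gj2}, which exist precisely to absorb the Poisson-variance contribution of $(N_{\ell 11}^{(j)})^2$; verifying this cancellation is a one-line check but is the reason the particular form of $\mathbf{g}$ was chosen.
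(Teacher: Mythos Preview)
Your proposal is correct and follows essentially the same approach as the paper: both invoke the mutual independence of the Poisson aggregates $N_{\ell 11}^{(j)}, N_{\ell 10}^{(j)}, N_{\ell 01}^{(j)}$, compute $\E_P[g_{j1}]$ and $\E_P[g_{j2}]$ directly, and match the resulting inequalities to the bounds defining $M_I(P)$ in \eqref{eq:MIP_pair}. If anything, you are slightly more explicit than the paper in spelling out why the $+N_{\ell 11}^{(j)}$ term is needed to cancel the Poisson second-moment contribution $\E_P[(N_{\ell 11}^{(j)})^2] = m_{11}^{*(j)} + (m_{11}^{*(j)})^2$; the paper simply asserts the implication without this intermediate step.
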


The resulting TIB confidence interval with $\mathbf{g}\paren{\mathbf{N}_\ell, M}$ defined in \eqref{eq:pair_g} will be a uniform asymptotic $1-\alpha$ confidence interval, as summarized below.
\begin{thm}[Asymptotic properties of TIB]
\label{thm:pair_ci}
Under Assumptions \ref{assum:feasibility}, \ref{assum:compactness} and \ref{assum:correctness}, with a parameter space taking the form in \eqref{eq:Theta_pair}, given pairwise restrictions $\paren{r_j, t_j, \eta_{j}, \xi_{j}}_{j = 1}^{\omega}$, and $\mathbf{g}$ defined in \eqref{eq:pair_g}, $ CI_{\text{TIB}}^{n,\alpha}$ defined by \eqref{eq:tibci} satisfies
\begin{equation}
 \liminf _{n \rightarrow \infty} \inf _{P \in \mathcal{P}} \inf _{ M \in M_I(P)} P\left\{ M \in CI_{\text{TIB}}^{n,\alpha}\right\} \geq 1-\alpha,
\end{equation}
where $1-\alpha$ is the pre-specified confidence level.
\end{thm}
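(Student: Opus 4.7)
The plan is to mirror the proof of Theorem \ref{thm:high_ci} almost verbatim, since the inferential architecture for the pairwise case is constructed to fit the same test-inversion framework. Concretely, I would invoke Theorem 3.1 of \citet{romano2014practical}, which guarantees uniform asymptotic coverage for the test-inversion confidence interval \eqref{eq:tibci} whenever two hypotheses hold: (i) the identification set $M_I(P)$ admits a finite moment-inequality representation, i.e., $M_I(P) = \{M \in \R_+ : \E_P[\mathbf{g}(\mathbf{N}_\ell, M)] \preceq \mathbf{0}\}$; and (ii) the normalized moment functions $(g_q - \mu_q)/\sigma_q$ satisfy a uniform integrability condition across $P \in \mathcal{P}$ and $M \in M_I(P)$. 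Under these two hypotheses, the critical-value construction in \eqref{eq:zeta}--\eqref{eq:tau_n} and the two-step test \eqref{eq:phi_n} automatically control size uniformly at level $\alpha$.

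Step one is to deploy Lemma \ref{lem:pair_mi}, which delivers hypothesis (i) directly: given pairwise restrictions $(r_j, t_j, \eta_j, \xi_j)_{j=1}^\omega$, the functional $\mathbf{g}$ in \eqref{eq:pair_g} is engineered so that $\E_P[g_{j1}] \le 0$ and $\E_P[g_{j2}] \le 0$ become exactly the two OR bounds defining $\Theta$ in \eqref{eq:Theta_pair}. Step two is to invoke Lemma \ref{lem:pair_regularity}, which provides hypothesis (ii). Combining these with Theorem 3.1 of \citet{romano2014practical} immediately yields
\[
\liminf_{n\to\infty} \inf_{P \in \mathcal{P}} \inf_{M \in M_I(P)} P\{M \in CI_{\text{TIB}}^{n,\alpha}\} \ge 1-\alpha,
\]
which is the desired conclusion.

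The main obstacle is really encapsulated inside Lemma \ref{lem:pair_regularity}, whose proof the paper omits by analogy to Lemma \ref{lem:high_regularity}. Writing that proof in detail, one reduces uniform square-integrability to the bound
\[
\sup_{P \in \mathcal{P}}\sup_{M \in M_I(P)} \E_P\!\left[\paren{\tfrac{g_q - \mu_q}{\sigma_q}}^4\right] < \infty,
\]
which in turn requires (a) a uniform upper bound on $\E_P[(g_q - \mu_q)^4]$, and (b) a uniform lower bound $\underline{\sigma_q} > 0$ on the variances. For (a), each $g_q$ in \eqref{eq:pair_gj1}--\eqref{eq:pair_gj2} is a polynomial of degree at most two in the Poisson counts $N_{\ell d_1 d_2}^{(j)}$ and linear in $M$; since Poisson moments of every order exist and $(\m, M)$ ranges over the compact set guaranteed by Assumption \ref{assum:compactness}, $\E_P[(g_q - \mu_q)^4]$ is a continuous function on a compact set and hence bounded. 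For (b), I would argue by contradiction exactly as in Lemma \ref{lem:high_regularity}: if $\sigma_q(P,M) = 0$ for some admissible $(P,M)$, then $g_q$ is almost surely constant, which contradicts the non-degenerate Poisson distribution of the underlying counts $N_{\ell d_1 d_2}^{(j)}$ whose means are bounded away from zero by Assumption \ref{assum:compactness}. Continuity of $\sigma_q(P,M)$ in $(\theta^\ast, M)$ together with compactness then upgrades this pointwise positivity to a uniform lower bound via the extreme value theorem. The quadratic (rather than the multiplicative) form of $g_q$ here makes the variance-positivity step arguably cleaner than in Lemma \ref{lem:high_regularity}, but the argument is otherwise identical.
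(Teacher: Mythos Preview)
Your proposal is correct and matches the paper's own proof exactly: the paper simply states that the theorem follows from Lemmas \ref{lem:pair_mi} and \ref{lem:pair_regularity} together with Theorem 3.1 of \citet{romano2014practical}. Your additional sketch of how to verify Lemma \ref{lem:pair_regularity} (which the paper omits by analogy) is a welcome expansion and follows the intended template from Lemma \ref{lem:high_regularity}.
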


\section{Simulations}
\label{subsec:simu}

We investigate the finite-sample performance of the proposed confidence intervals with simulations with $k=3$, $5$, and $10$ capture samples and various restriction forms are available. In addition, we study their performance under violation of assumptions. As a benchmark, we also present CRC population estimates under traditional log-linear Poisson models implemented by popular CRC software packages.
Our simulation results indicate that TIB confidence intervals are always valid, tending to be conservative generally, while PL confidence intervals tend to be more anti-conservative when $k$ or $\omega$ becomes larger. Computationally, the time needed to compute the TIB interval is insensitive to $k$, and grows marginally in proportion to $\omega$, usually within 3 minutes on a MacBook Pro with a 3.1 GHz Dual-Core Intel Core i5 processor. In contrast, the computation time for PL grows quickly with $k$, making it best suited to CRC studies with small $k$.

We first give a short overview of traditional log-linear Poisson models. In the log-linear model, $\log m_i = \blambda'\x_i$, where $\blambda = (\lambda_0, \ldots, \lambda_c) \in\mathbb{R}^{c+1}$ are coefficients for $\x_i$, $i = 0,\ldots,c$. The parameters $\blambda$ have interpretations of dependence between samples. For the population size to be point identified, as in traditional CRC analyses, the analyst must force one or more of the $\lambda_i$ to be zero.  We use the R \citep{R2020} package ``Rcapture'' \citep{baillargeon2007rcapture} to compute confidence intervals under log-linear models. In particular, we use the common \emph{hierarchical} specification of dependence assumptions: if $\lambda_i$ representing interactions among $k'$ samples is set to $0$ ($2 \le k' \le k$), then any higher-order interaction terms involving these $k'$ samples must be $0$.

We show results with 3 samples below. Detailed simulation results for $k = 5, 10$ are available in the \emph{Supplementary Appendix}.
With three simulated capture samples, i.e. $k = 3$, we set the mean value of each observable subset count in our simulations to be equal to the observed empirical counts of PWID for Brussels, Belgium shown in Figure \ref{fig:pwid1}, i.e. $\m^\ast = (21, 103, 13, 89, 29, 24, 27)$, so that the dependence structure resembles real data. We generate one contingency table using Poisson distribution, and compute confidence intervals of the population size using our methods and log-linear models. We repeat the above data generation and estimation for $6000$ times, and compute the frequency of the confidence intervals covering the true population size $M^\ast$, as $M^\ast$ varies.

We compute test inversion bootstrap confidence intervals ($CI_{\text{TIB}}$) and profile likelihood confidence intervals ($CI_{\text{PL}}$) under positive (with $\eta = 1, \xi = 5$) and agnostic (with $\eta = 1/3, \xi = 3$) pairwise dependence restrictions. Here, we define ``\emph{agnostic}'' pairwise restrictions as those which make no assumptions on the direction of dependence, having the form $\eta \le \OR \le \xi$, where $\eta = \frac{1}{\xi}$. 

We also compute confidence intervals under hierarchical log-linear models with all 3 samples and only 2 samples. The hierarchical dependence is described by sample indices (i.e. 1, 2, 3). When these sample indices are not separated by commas, it indicates the existence of an interaction term among these samples, as well as all its nested interaction terms in the log-linear model. 
In practice, epidemiologists often select one model by goodness-of-fit criteria commonly used in regression modeling, such as Akaike information criterion (AIC) \citep{akaike1998information} or Bayesian information criterion (BIC) \citep{schwarz1978estimating}. Here, we use BIC to select the model with the lowest BIC among all models using 3 samples, which is called ``BestBIC" in the following.  The nominal coverage probability is set to $1-\alpha = 0.95$ throughout. 

Figure \ref{fig:simu_part} summarizes the results of our methods and certain log-linear models for comparison (i.e. the independence model [1,2,3], the saturated model [12,13,23], and the BestBIC model. For full results of all hierarchical models, see Figure S1 in \emph{Supplementary Appendix}.)
The horizontal axis is the true population size $M^\ast$ and the vertical axis is the coverage probability for each interval estimate. Two vertical bars mark the identification region of the true population size $M_I(P)$ as in Equation \eqref{eq:MIP_pair}. Therefore, it is clear that whenever $M^\ast$ is in $M_I(P)$, i.e. Assumption \ref{assum:correctness} holds, the coverage probability of our methods by either $CI_{\text{TIB}}$ or $CI_{\text{PL}}$ achieves the nominal value. 
The intervals are generally conservative: actual coverage may be in excess of nominal (i.e. 95\%) coverage for each single value in the identification region. This is because in the partially identified case, many values in the parameter space are observationally equivalently true, and therefore, a valid confidence interval in this case should have the correct coverage probability simultaneously for all these values, i.e. the lowest coverage should at least be 95\%. Due to the similarity of this setting and our real data application, our simulation results imply that the coverage of the confidence intervals for the PWID data set in Section \ref{sec:app} will be close to the nominal level.

In contrast, log-linear models perform less favorably and are not able to flexibly utilize the information of pairwise restrictions. Furthermore, the ``BestBIC'' model cannot achieve the nominal coverage probability even at its peak. Additionally, the average length of confidence intervals by our methods can be comparable to models with strong hierarchical assumptions, with much higher coverage probability at the same time (See Figure S2 in the \emph{Supplementary Appendix}).

We also investigate performances under violation of assumptions. Since Assumption \ref{assum:feasibility} (Feasibility) can be verified, and Assumption \ref{assum:compactness} (Compactness) holds in most cases, we focus on Assumption \ref{assum:correctness} (Correctness). Define $\pmb{m} = \lbrace \m: (\m, M) \in \Theta \text{ for some } M \rbrace$, and  $C_{\m} \equiv \{M: (M, \m) \in \Theta\}$ for a given $\m$.  There are two types of violations: (A) $\m^\ast \in \pmb{m}$, but $M^\ast \notin C_{\m^\ast}$; (B) $\m^\ast \notin \pmb{m}$. In fact, the consequences of type A can be seen in Figure \ref{fig:simu_part}: the farther $M^\ast$ is away from the hypothesized identification region marked by the vertical lines from the misspecified pairwise restrictions, the lower the coverage probability will be for $M^\ast$. Type B violations usually happen when pairwise restrictions are too strong. For example, under our simulation setting, $\eta = 1, \xi = 3$ will render $\m^\ast \notin \pmb{m}$. We study the performance of our methods under this type of violations and the results are shown in Figure S3 in the \emph{Supplementary Appendix}.

\begin{figure}
  \centering
  \includegraphics[width=\textwidth]{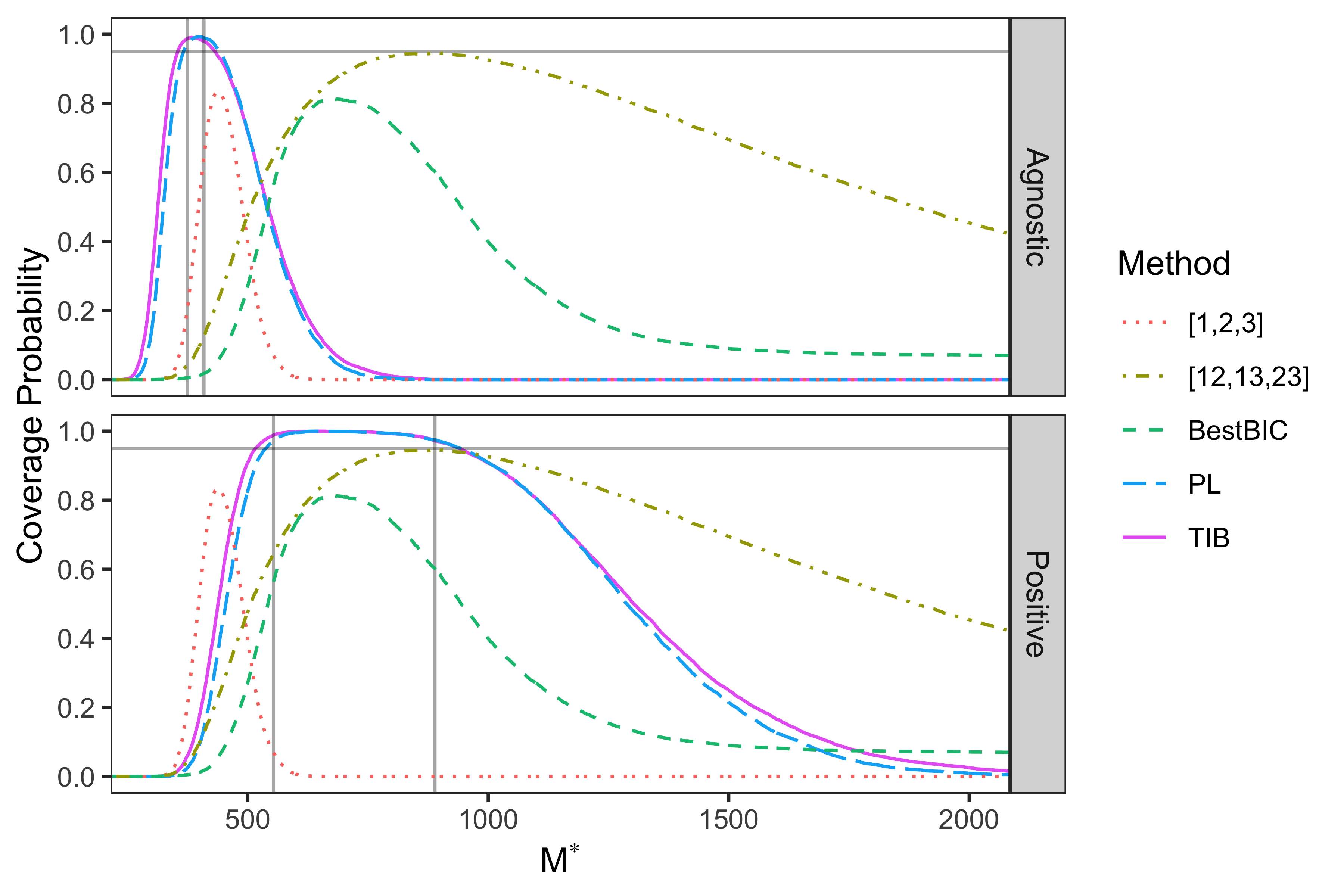}
  \caption{Simulation results for coverage probability of nominal 95\% confidence intervals $CI_{\text{TIB}}$ and $CI_{\text{PL}}$ as the true population size $M^\ast$ varies, computed from 6000 simulated tables. Three log-linear models presented as a comparison: the independence model [1,2,3], the saturated model [12,13,23], and BestBIC model. The horizontal line represents the nominal coverage probability $95\%$.  The two vertical lines represent the lower and upper endpoints of identification sets. \emph{Agnostic} pairwise restriction: $\eta = 1/3, \xi = 3$; \emph{Positive} pairwise restriction: $\eta = 1, \xi = 5$.} 
  \label{fig:simu_part}
\end{figure}

\section{Application: estimating the number of people who inject drugs in Brussels, Belgium}
\label{sec:app}

We apply this methodology to a CRC study to estimate the population size of people who inject drugs (PWID) in Brussels, Belgium \citep{plettinckx2020estimates}. Because injection drug use is often stigmatized or legally criminalized, it can be difficult to conduct a systematic survey of PWID \citep{kwon2019estimating}. Instead, indirect estimation techniques like capture-recapture surveys (CRC) are recommended \citep{hay2016estimating}. To update official estimates of the number of PWID (defined here as individuals who injected drugs within the last 12 months) in Brussels to guide the scale and scope of treatment and harm reduction services offered to PWID,  \citet{plettinckx2020estimates} obtained three anonymized PWID samples between February and April 2019 in Brussels from the following sources:
1) an RDS fieldwork study designed to include PWID not in contact with public services \citep{van2020prevalence},
2) two low-threshold drug treatment centers (``MSOC/MASS and Projet Lama'') which offer specialized drug treatment services and opioid substitution treatment, and
3) a crisis intervention center and shelter (``Transit asbl''), which offers psycho-social support during the day and a shelter at night.

The overall subject inclusion criteria, across three data sources, were: having injected any substance within the last 12 months, age 18 or older, and having lived or used drugs in Brussels principally during the last year.  RDS respondents had to be selected by one of the participating organizations as a seed, or have received an invitation by means of a recruitment coupon from a participant, and had not participated earlier.  Violations of the ``closed population'' and ``homogeneous capture probability'' assumptions could happen for the above experiments, however, we assume that they are negligible. We leave robust methods for these violations for future research.

In the Belgium PWID data set, some of the ``seed'' participants in the Respondent Driven Sampling study were from two low-threshold drug treatment centers \citep{van2020prevalence}. Therefore, it is likely that samples 1 and 2 are positively dependent. In addition, since people who approach one service will be more likely to approach another similar service, samples 2 and 3 are also possibly positively dependent. We therefore apply qualitative restrictions on pairwise dependence: all three samples are pairwise positively dependent.

\subsection{Inference under hierarchical log-linear models}

As a comparison, we first show the results of CRC under hierarchical log-linear Poisson models using the R package ``Rcapture'' \citep{baillargeon2007rcapture}. Table \ref{tab:pwid_loglinear} and Figure \ref{fig:rc_high_pair_linear}(a) show point estimates (using the log-linear Poisson model) of the size of the PWID population in Brussels, along with standard errors, 95\% confidence intervals, AIC, and BIC. \citet{plettinckx2020estimates} provide similar estimates using different CRC software. 

\begin{table}[t]
  \centering
  \begin{tabular}{lcllll}
    \toprule 
    Model & $\widehat{M}$ & SE & $CI_{IND}$ & AIC & BIC \\ 
    \midrule
    $[12,13,23]$ & 880 & 293.2 & (505, 1835) & 51.5 & 77.5 \\ 
    $[12,13]$ & 472 & 62.3 & (381, 643) & 62.0 & 84.4 \\ 
    $[12,23]$ & 370 & 21.3 & (336, 421) & 81.4 & 103.8 \\ 
    $[13,23]$ & $\mathbf{688}$ & 97.6 & (535, 936) & 50.3 & $\mathbf{72.7}$ \\ 
    $[12,3]$ & 372 & 18.8 & (340, 414) & 79.5 & 98.1 \\ 
    $[13,2]$ & 530 & 43.0 & (456, 628) & 60.9 & 79.5 \\ 
    $[23,1]$ & 458 & 29.6 & (407, 524) & 91.6 & 110.2 \\ 
    $[1,2,3]$ & 439 & 23.4 & (397, 490) & 92.0 & 106.9 \\ \midrule
    $[1,2]$ & 553 & 54.0 & (463, 679) & 25.0 & 35.9 \\ 
    $[1,3]$ & 272 & 18.2 & (241, 313) & 23.8 & 33.7 \\ 
    $[2,3]$ & 376 & 38.6 & (312, 467) & 24.0 & 34.1 \\ 
    \bottomrule
  \end{tabular}
  \caption{Population size estimates $\widehat{M}$ of the number of people who inject drugs in Brussels, Belgium, standard errors (SE), 95\% confidence intervals $CI_{IND}$, Akaike information criterion (AIC), and Bayesian information criterion (BIC) for point-identified log-linear hierarchical models, computed using the R package ``Rcapture'' Version 1.4-3 \citep{baillargeon2007rcapture}. (Sample 1: Fieldwork Study; Sample 2: Low Threshold Treatment Centers; Sample 3: Crisis Intervention Center and Shelter.) }
  \label{tab:pwid_loglinear}
\end{table}

\begin{table}
  \centering
  \begin{tabular}{llllllll}
    \toprule 
    $\eta_1$ & $\xi_1$ & $\eta_2$ & $\xi_2$ & $\eta_3$ & $\xi_3$ & $CI_{TIB}$ & $CI_{PL}$  \\ 
    \midrule
    1 & 10 & 1 & 10 & 1 & 10 & (436, 1310) & (454, 1284) \\
    1 & 5 & 1 & 5 & 1 & 5 & (434, 784) & (454, 754) \\ 
    1 & 3 & 1 & 3 & 1 & 3 & (429, 561) & (425, 617) \\
    0.8 & 10 & 0.8 & 10 & 0.8 & 10 & (404, 1287) & (416, 1243) \\
    1 & 10 & $-\infty$ & $+\infty$ & 1 & 10 & (452, 2699) & (454, 2485)\\
    1 & 10 & $-\infty$ & $+\infty$ & $-\infty$ & $+\infty$ & (469, 3986) & (454, 3887)\\
    1 & 5 & 1 & $+\infty$ & 0.8 & 10 & (451, 2301) & (454, 2195) \\
    \bottomrule 
  \end{tabular}

  \caption{Estimated 95\% confidence intervals for the population size under pairwise dependence restrictions $\eta_1 \le \OR_{1,2} \le \xi_1, \eta_2 \le \OR_{1,3} \le \xi_2, \eta_3 \le \OR_{2, 3} \le \xi_3$. $CI_{TIB}$: Test inversion bootstrap confidence interval. $CI_{PL}$: Profile likelihood confidence interval. }
  \label{tab:pwid_pi}
\end{table}

The ``Model'' column shows the dependence model assumed, where 1, 2, and 3 are sample indices, representing ``Fieldwork Study'', ``Low Threshold Treatment Centers'' and ``Crisis Intervention Center and Shelter'' respectively. When using all three samples, model ``[13,23]'' is the BestBIC model, with an estimate of 688 PWID in Brussels.

Based on recommendations by \citet{hook2000accuracy}, we also investigated interval validity by computing population size estimates only using every two of three samples. These models are labeled as ``[1,2]", ``[1,3]" and ``[2,3]", with results shown in the lower part of Table \ref{tab:pwid_loglinear}. Estimates using only two samples are generally smaller than those using three samples, indicating positive dependence among capture samples. For example, the very low estimate obtained by ``[1,3]" suggests strong positive dependence between samples 1 and 3, the Fieldwork Study sample and the Crisis Intervention Center and Shelter sample.

\subsection{Inference under pairwise restrictions}

We apply our methods using the qualitative information about the pairwise dependence summarized above. Denote pairwise dependence restrictions as $\eta_1 \le \OR_{1,2} \le \xi_1, \eta_2 \le \OR_{1,3} \le \xi_2, \eta_3 \le \OR_{2, 3} \le \xi_3$. Recall that $\OR_{rt}$ is the odds ratio for the capture probabilities in samples $r$ and $t$.
Since all three samples are pairwise positively dependent, we have $\OR_{1,2} = \OR_{1,3} = \OR_{2,3} \in [1, \xi]$ for each pairwise dependence odds ratio.
We choose $\xi_1 = \xi_2 = \xi_3 = \xi = 10$. An odds ratio of 10 is large and conservative, however, it is more credible.
Interval estimates are shown in Table \ref{tab:pwid_pi} and visualized in Figure \ref{fig:rc_high_pair_linear}(b). $CI_{TIB}$ and $CI_{PL}$ have similar estimates. In this case, we use $CI_{TIB}$. Thus, the estimated 95\% confidence interval for the number of people who inject drugs in Brussels, Belgium is between 436 and 1310. 
Additionally, to assess the sensitivity of results to assumptions about strictly positive pairwise dependence, we study results under different values of $\eta_i,\xi_i$ which correspond to different forms of pairwise restrictions. 
The corresponding estimates are shown in Table \ref{tab:pwid_pi} and Figure \ref{fig:rc_high_pair_linear}. In Table \ref{tab:pwid_pi}, Rows 2 and 3 show the influence of the common upper bound $\xi$; Row 4 relaxes the ``positivity'' condition by allowing slightly negative dependence; Rows 5 and 6 show the impact of the number of restrictions; Row 7 allows each restriction to be different.

\begin{figure}
  \centering
  \includegraphics[width=\textwidth]{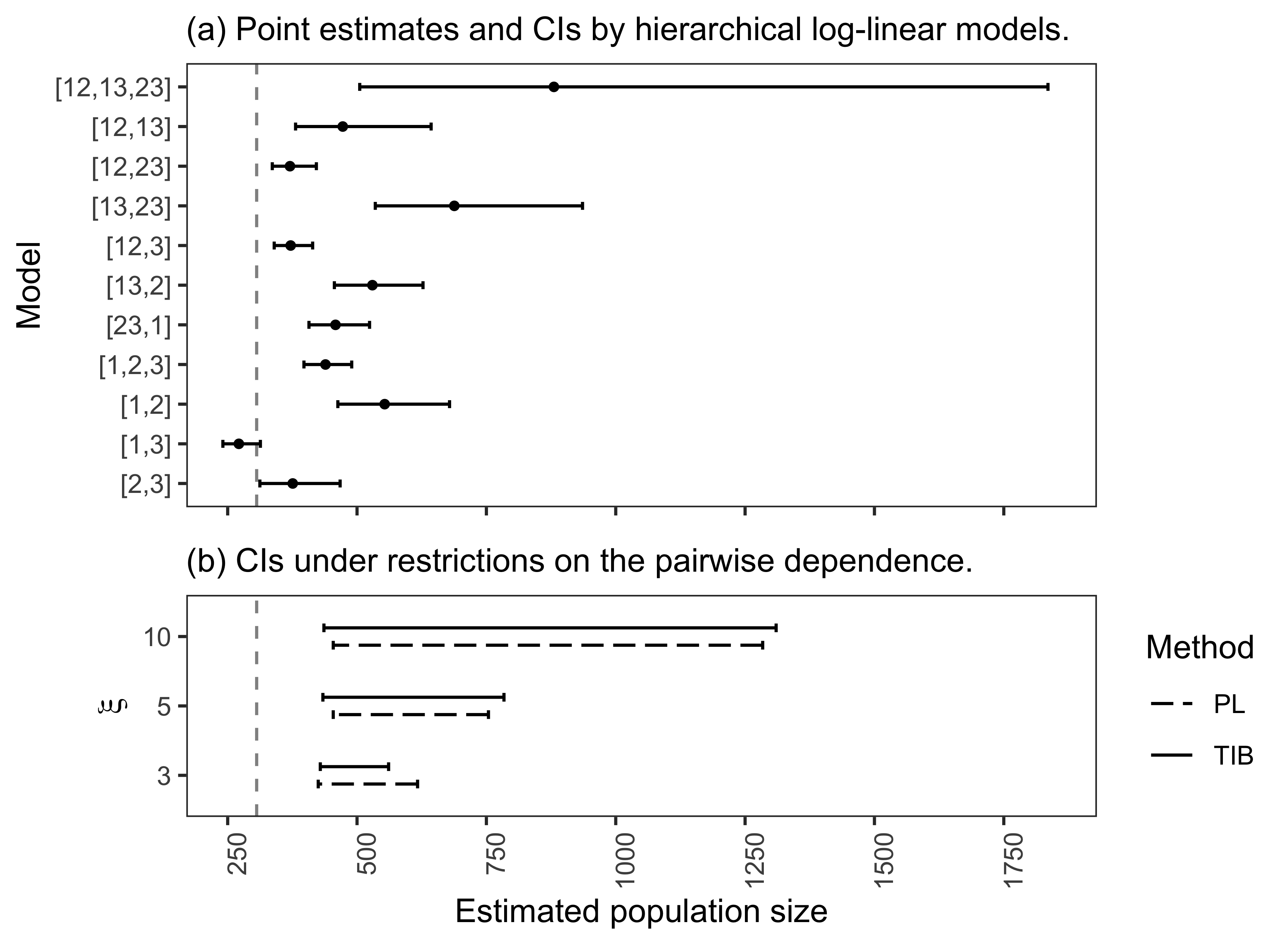}
  \caption{Illustration of 95\% confidence intervals for the number of PWID in Brussels, Belgium.  \textbf{(a)} shows point estimates and confidence intervals (CIs) by hierarchical log-linear models. \textbf{(b)} shows CIs under restrictions on pairwise dependence between samples with $1 \le \OR_{1,2} = \OR_{1,3} = \OR_{2,3} \le \xi$.  The vertical dashed line is a lower bound for the number of PWID in Brussels: the observed number of unique PWID across all three samples, 306. \emph{TIB}: Test inversion bootstrap CIs; \emph{PL}: profile likelihood CIs.}
  \label{fig:rc_high_pair_linear}
\end{figure}


\section{Discussion}

CRC surveys are used in situations where experts and policymakers do not agree on the size of the target population. Therefore the empirical credibility of population size estimates hinges on the credibility of the statistical (in)dependence assumptions between samples used.  Usually independent random sampling requires a ``sampling frame'' from which exact or approximate unit sampling probabilities can be computed.  When the size of the target population is truly unknown, the construction of a well-defined sampling frame can be difficult or impossible.  CRC surveys involve several such samples from the target population. When the sampling frame for each survey is ill-defined, it can be difficult to guarantee (in)dependence structures of samples, or to estimate the nature of dependence between samples.  

In this paper, we proposed a novel frequentist method that is flexible in the identification phase of the statistical problem of CRC experiments, which allows easy incorporation of domain knowledge on pairwise dependence. Our inferential procedures are theoretically valid to deal with possibly partially identified parameters, and thus eliminate the need to employ implausible assumptions to achieve point identification.  Our approach is distinct from that of Bayesian approaches in CRC \citep{aleshin2021revisiting} because it does not impose a prior distribution over possibly partially identified parameters; rather, we assume that only bounds on these parameters are known from empirical knowledge. By the general Bayesian theory of partial identification \citep{canay2017practical,  moon2012bayesian, kitagawa2012estimation}, prior information for partially identified parameters will not be washed out, even asymptotically. Therefore, Bayesian credible intervals tend to be shorter than frequentist confidence intervals in this case because they retain information from the prior over dependence parameters, even in large samples.

This work has several limitations. First, we have focused on the case of homogeneous sampling probabilities within samples. Since differing selection probabilities may occur in complex CRC surveys \citep{gimenez2018individual}, in ongoing work extending the approach presented here, we use measured categorical covariates (e.g. sex and age groups) to deal with heterogeneity. We will stratify all the subjects by covariates, and impose stratum-specific pairwise restrictions from empirical knowledge. Applying our partial identification framework to the combination of the sets of moment conditions from each stratum results in an interval estimate of the total population size.
Second, we have not addressed more complex knowledge of putative dependence structures beyond pairwise relationships. Although rare, when this type of information is available, it leads to additional moment conditions that can be easily incorporated into the current methodological framework. 

In general, we recommend that researchers rely on domain knowledge of the target population and the nature of the sampling procedures to choose $\eta$ and $\xi$. When such information is vague, we suggest choosing a conservative value to ensure the credibility of inference results. When domain knowledge is plentiful (e.g. all samples are independent, or samples are pairwise positively dependent), the methodology proposed here will deliver highly informative (narrow) interval estimates for the target population size, as exemplified in the Application section. 
We expect that generally, inferences based on the weakest credible assumptions may be more useful to empiricists or policymakers who may not agree on the exact nature and magnitude of dependence between samples.

\if0\blind
{
\noindent \textbf{Acknowledgements}: This work was supported by NIH grant NICHD DP2 HD091799-01.  We are grateful to 
 P. M. Aronow
 and
 Si Cheng
for helpful comments on the manuscript. We thank the local partners Transit asbl, MASS de Bruxelles, Projet Lama and SamuSocial for their support in reaching out to PWID, as well as Lies Gremeaux and J\'er\^ome Antoine for their involvement in the fieldwork. We thank the nurses who conducted the fieldwork and who had a major contribution to the success of the study. Last but not least thanks to all the participants for their confidence and time. 
} \fi

\textbf{Supplementary Appendix}: Proofs and additional lemmas, details of constructing test inversion bootstrap confidence intervals, and extra simulation results are available in the \textit{Supplementary Appendix} online. We implemented the proposed methodology in the R package \pkg{crc.partialid} available at \url{https://github.com/Jinghao-Sun/crc.partialid}, which also includes the Brussels PWID data set.

\bibliographystyle{agsm}
\bibliography{crc_outline}

\begin{thebibliography}{xx}

\harvarditem{Agresti}{1994}{agresti1994simple}
Agresti, A.  \harvardyearleft 1994\harvardyearright , `Simple capture-recapture
  models permitting unequal catchability and variable sampling effort', {\em
  Biometrics} pp.~494--500.

\harvarditem{Akaike}{1998}{akaike1998information}
Akaike, H.  \harvardyearleft 1998\harvardyearright , Information theory and an
  extension of the maximum likelihood principle, {\em in} `Selected Papers of
  Hirotugu Akaike', Springer, pp.~199--213.

\harvarditem{Aleshin-Guendel \harvardand\ Sadinle}{2022}{aleshin2022multifile}
Aleshin-Guendel, S. \harvardand\ Sadinle, M.  \harvardyearleft
  2022\harvardyearright , `Multifile partitioning for record linkage and
  duplicate detection', {\em Journal of the American Statistical Association}
  pp.~1--10.

\harvarditem[Aleshin-Guendel et~al.]{Aleshin-Guendel, Sadinle \harvardand\
  Wakefield}{2021}{aleshin2021revisiting}
Aleshin-Guendel, S., Sadinle, M. \harvardand\ Wakefield, J.  \harvardyearleft
  2021\harvardyearright , `Revisiting identifying assumptions for population
  size estimation', {\em arXiv preprint arXiv:2101.09304} .

\harvarditem[Baffour et~al.]{Baffour, Brown \harvardand\
  Smith}{2013}{baffour2013investigation}
Baffour, B., Brown, J.~J. \harvardand\ Smith, P.~W.  \harvardyearleft
  2013\harvardyearright , `An investigation of triple system estimators in
  censuses', {\em Statistical Journal of the IAOS} {\bf 29}(1),~53--68.

\harvarditem{Baillargeon \harvardand\ Rivest}{2007}{baillargeon2007rcapture}
Baillargeon, S. \harvardand\ Rivest, L.-P.  \harvardyearleft
  2007\harvardyearright , `Rcapture: loglinear models for capture-recapture in
  {R}', {\em Journal of Statistical Software} {\bf 19}(5),~1--31.

\harvarditem{Bell}{1993}{bell1993using}
Bell, W.~R.  \harvardyearleft 1993\harvardyearright , `Using information from
  demographic analysis in post-enumeration survey estimation', {\em Journal of
  the American Statistical Association} {\bf 88}(423),~1106--1118.

\harvarditem[Bishop et~al.]{Bishop, Fienberg \harvardand\
  Holland}{2007}{bishop2007discrete}
Bishop, Y.~M., Fienberg, S.~E. \harvardand\ Holland, P.~W.  \harvardyearleft
  2007\harvardyearright , {\em Discrete Multivariate Analysis: Theory and
  Practice}, Springer Science \& Business Media.

\harvarditem[B{\"o}hning et~al.]{B{\"o}hning, Rocchetti, Maruotti \harvardand\
  Holling}{2020}{bohning2020estimating}
B{\"o}hning, D., Rocchetti, I., Maruotti, A. \harvardand\ Holling, H.
  \harvardyearleft 2020\harvardyearright , `Estimating the undetected
  infections in the covid-19 outbreak by harnessing capture--recapture
  methods', {\em International Journal of Infectious Diseases} {\bf
  97},~197--201.

\harvarditem[Brown et~al.]{Brown, Abbott \harvardand\
  Diamond}{2006}{brown2006dependence}
Brown, J., Abbott, O. \harvardand\ Diamond, I.  \harvardyearleft
  2006\harvardyearright , `Dependence in the 2001 one-number census project',
  {\em Journal of the Royal Statistical Society: Series A (Statistics in
  Society)} {\bf 169}(4),~883--902.

\harvarditem[Brown et~al.]{Brown, Diamond, Chambers, Buckner \harvardand\
  Teague}{1999}{brown1999methodological}
Brown, J., Diamond, I., Chambers, R., Buckner, L. \harvardand\ Teague, A.
  \harvardyearleft 1999\harvardyearright , `A methodological strategy for a
  one-number census in the {UK}', {\em Journal of the Royal Statistical
  Society: Series A (Statistics in Society)} {\bf 162}(2),~247--267.

\harvarditem{Canay \harvardand\ Shaikh}{2017}{canay2017practical}
Canay, I.~A. \harvardand\ Shaikh, A.~M.  \harvardyearleft 2017\harvardyearright
  , Practical and theoretical advances in inference for partially identified
  models, {\em in} `Advances in Economics and Econometrics: Eleventh World
  Congress', Vol.~2, Cambridge University Press, pp.~271--306.

\harvarditem{Chao}{2001}{chao2001overview}
Chao, A.  \harvardyearleft 2001\harvardyearright , `An overview of closed
  capture-recapture models', {\em Journal of Agricultural, Biological, and
  Environmental Statistics} {\bf 6}(2),~158--175.

\harvarditem{Comiskey \harvardand\ Barry}{2001}{comiskey2001capture}
Comiskey, C. \harvardand\ Barry, J.  \harvardyearleft 2001\harvardyearright ,
  `A capture recapture study of the prevalence and implications of opiate use
  in {Dublin}', {\em The European Journal of Public Health} {\bf
  11}(2),~198--200.

\harvarditem{Cormack}{1979}{cormack1979models}
Cormack, R.~M.  \harvardyearleft 1979\harvardyearright , Models for
  capture-recapture, {\em in} `Sampling Biological Populations', Vol.~5,
  International Co-operative Publishing House Montpellier, France,
  pp.~217--255.

\harvarditem{Cormack}{1989}{cormack1989log}
Cormack, R.~M.  \harvardyearleft 1989\harvardyearright , `Log-linear models for
  capture-recapture', {\em Biometrics} pp.~395--413.

\harvarditem[Crawford et~al.]{Crawford, Wu \harvardand\
  Heimer}{2018}{crawford2018hidden}
Crawford, F.~W., Wu, J. \harvardand\ Heimer, R.  \harvardyearleft
  2018\harvardyearright , `Hidden population size estimation from
  respondent-driven sampling: a network approach', {\em Journal of the American
  Statistical Association} {\bf 113}(522),~755--766.

\harvarditem[Das et~al.]{Das, Kennedy \harvardand\ Jewell}{2021}{das2021doubly}
Das, M., Kennedy, E.~H. \harvardand\ Jewell, N.~P.  \harvardyearleft
  2021\harvardyearright , `Doubly robust capture-recapture methods for
  estimating population size', {\em arXiv preprint arXiv:2104.14091} .

\harvarditem[Di~Cecco et~al.]{Di~Cecco, Di~Zio, Filipponi \harvardand\
  Rocchetti}{2018}{di2018population}
Di~Cecco, D., Di~Zio, M., Filipponi, D. \harvardand\ Rocchetti, I.
  \harvardyearleft 2018\harvardyearright , `Population size estimation using
  multiple incomplete lists with overcoverage', {\em Journal of Official
  Statistics} {\bf 34}(2),~557--572.

\harvarditem[Dombrowski et~al.]{Dombrowski, Khan, Wendel, McLean, Misshula
  \harvardand\ Curtis}{2012}{dombrowski2012estimating}
Dombrowski, K., Khan, B., Wendel, T., McLean, K., Misshula, E. \harvardand\
  Curtis, R.  \harvardyearleft 2012\harvardyearright , `Estimating the size of
  the methamphetamine-using population in {New York City} using network
  sampling techniques', {\em Advances in Applied Sociology} {\bf 2}(4),~245.

\harvarditem{Efron \harvardand\ Tibshirani}{1994}{efron1994introduction}
Efron, B. \harvardand\ Tibshirani, R.~J.  \harvardyearleft
  1994\harvardyearright , {\em An Introduction To the Bootstrap}, CRC press.

\harvarditem{Fienberg}{1972}{fienberg1972multiple}
Fienberg, S.~E.  \harvardyearleft 1972\harvardyearright , `The multiple
  recapture census for closed populations and incomplete $2^k$ contingency
  tables', {\em Biometrika} {\bf 59}(3),~591--603.

\harvarditem[Gerritse et~al.]{Gerritse, van~der Heijden \harvardand\
  Bakker}{2015}{gerritse2015sensitivity}
Gerritse, S.~C., van~der Heijden, P.~G. \harvardand\ Bakker, B.~F.
  \harvardyearleft 2015\harvardyearright , `Sensitivity of population size
  estimation for violating parametric assumptions in log-linear models', {\em
  Journal of Official Statistics} {\bf 31}(3),~357--379.

\harvarditem[Gimenez et~al.]{Gimenez, Cam \harvardand\
  Gaillard}{2018}{gimenez2018individual}
Gimenez, O., Cam, E. \harvardand\ Gaillard, J.-M.  \harvardyearleft
  2018\harvardyearright , `Individual heterogeneity and capture--recapture
  models: what, why and how?', {\em Oikos} {\bf 127}(5),~664--686.

\harvarditem{Hay \harvardand\ Richardson}{2016}{hay2016estimating}
Hay, G. \harvardand\ Richardson, C.  \harvardyearleft 2016\harvardyearright ,
  `Estimating the prevalence of drug use using mark-recapture methods', {\em
  Statistical Science} pp.~191--204.

\harvarditem{Heckathorn}{1997}{heckathorn1997respondent}
Heckathorn, D.~D.  \harvardyearleft 1997\harvardyearright , `Respondent-driven
  sampling: a new approach to the study of hidden populations', {\em Social
  Problems} {\bf 44}(2),~174--199.

\harvarditem[Hickman et~al.]{Hickman, Hope, Coleman, Parry, Telfer, Twigger,
  Irish, Macleod \harvardand\ Annett}{2009}{hickman2009assessing}
Hickman, M., Hope, V., Coleman, B., Parry, J., Telfer, M., Twigger, J., Irish,
  C., Macleod, J. \harvardand\ Annett, H.  \harvardyearleft
  2009\harvardyearright , `Assessing {IDU} prevalence and health consequences
  ({HCV}, overdose and drug-related mortality) in a primary care trust:
  implications for public health action', {\em Journal of Public Health} {\bf
  31}(3),~374--382.

\harvarditem{Hook \harvardand\ Regal}{2000}{hook2000accuracy}
Hook, E.~B. \harvardand\ Regal, R.~R.  \harvardyearleft 2000\harvardyearright ,
  `Accuracy of alternative approaches to capture-recapture estimates of disease
  frequency: internal validity analysis of data from five sources', {\em
  American Journal of Epidemiology} {\bf 152}(8),~771--779.

\harvarditem{Jolly}{1979}{jolly1979unified}
Jolly, G.  \harvardyearleft 1979\harvardyearright , `A unified approach to
  mark-recapture stochastic models, exemplified by a constant survival rate
  model', {\em Sampling Biological Populations. Statistical Ecology Series}
  {\bf 5},~277--82.

\harvarditem[Jones et~al.]{Jones, Welton, Ades, Pierce, Davies, Coleman, Millar
  \harvardand\ Hickman}{2016}{jones2016problem}
Jones, H.~E., Welton, N.~J., Ades, A., Pierce, M., Davies, W., Coleman, B.,
  Millar, T. \harvardand\ Hickman, M.  \harvardyearleft 2016\harvardyearright ,
  `Problem drug use prevalence estimation revisited: heterogeneity in
  capture--recapture and the role of external evidence', {\em Addiction} {\bf
  111}(3),~438--447.

\harvarditem[Kimani et~al.]{Kimani, McKinnon, Wachihi, Kusimba, Gakii, Birir,
  Muthui, Kariri, Muriuki, Muraguri, Musyoki, Ball, Kaul \harvardand\
  Gelmon}{2013}{kimani2013enumeration}
Kimani, J., McKinnon, L.~R., Wachihi, C., Kusimba, J., Gakii, G., Birir, S.,
  Muthui, M., Kariri, A., Muriuki, F.~K., Muraguri, N., Musyoki, H., Ball,
  T.~B., Kaul, R. \harvardand\ Gelmon, L.  \harvardyearleft
  2013\harvardyearright , `Enumeration of sex workers in the central business
  district of {Nairobi, Kenya}', {\em PLoS One} {\bf 8}(1).

\harvarditem[Kimber et~al.]{Kimber, Hickman, Degenhardt, Coulson \harvardand\
  Van~Beek}{2008}{kimber2008estimating}
Kimber, J., Hickman, M., Degenhardt, L., Coulson, T. \harvardand\ Van~Beek, I.
  \harvardyearleft 2008\harvardyearright , `Estimating the size and dynamics of
  an injecting drug user population and implications for health service
  coverage: comparison of indirect prevalence estimation methods', {\em
  Addiction} {\bf 103}(10),~1604--1613.

\harvarditem{Kitagawa}{2012}{kitagawa2012estimation}
Kitagawa, T.  \harvardyearleft 2012\harvardyearright , `Estimation and
  inference for set-identified parameters using posterior lower probability',
  {\em Manuscript, UCL} .

\harvarditem[Kruse et~al.]{Kruse, Frieda, Vaovola, Burkhardt, Barivelo, Amida
  \harvardand\ Dallabetta}{2003}{kruse2003participatory}
Kruse, N., Frieda, M.-T.~B., Vaovola, G., Burkhardt, G., Barivelo, T., Amida,
  X. \harvardand\ Dallabetta, G.  \harvardyearleft 2003\harvardyearright ,
  `Participatory mapping of sex trade and enumeration of sex workers using
  capture--recapture methodology in {Diego-Suarez, Madagascar}', {\em Sexually
  Transmitted Diseases} {\bf 30}(8),~664--670.

\harvarditem[Kwon et~al.]{Kwon, Iversen, Law, Dolan, Wand \harvardand\
  Maher}{2019}{kwon2019estimating}
Kwon, J.~A., Iversen, J., Law, M., Dolan, K., Wand, H. \harvardand\ Maher, L.
  \harvardyearleft 2019\harvardyearright , `Estimating the number of people who
  inject drugs and syringe coverage in {Australia}, 2005--2016', {\em Drug and
  Alcohol Dependence} {\bf 197},~108--114.

\harvarditem[Larson et~al.]{Larson, Stevens \harvardand\
  Wardlaw}{1994}{larson1994indirect}
Larson, A., Stevens, A. \harvardand\ Wardlaw, G.  \harvardyearleft
  1994\harvardyearright , `Indirect estimates of `hidden' populations:
  capture-recapture methods to estimate the numbers of heroin users in the
  {Australian Capital Territory}', {\em Social Science \& Medicine} {\bf
  39}(6),~823--831.

\harvarditem{Lewbel}{2019}{lewbel2019identification}
Lewbel, A.  \harvardyearleft 2019\harvardyearright , `The identification zoo:
  Meanings of identification in econometrics', {\em Journal of Economic
  Literature} {\bf 57}(4),~835--903.

\harvarditem[Manrique-Vallier et~al.]{Manrique-Vallier, Ball \harvardand\
  Sadinle}{2022}{manrique2022capture}
Manrique-Vallier, D., Ball, P. \harvardand\ Sadinle, M.  \harvardyearleft
  2022\harvardyearright , `Capture-recapture for casualty estimation and
  beyond: Recent advances and research directions', {\em Statistics in the
  Public Interest} pp.~15--31.

\harvarditem{Manski}{2003}{manski2003partial}
Manski, C.~F.  \harvardyearleft 2003\harvardyearright , {\em Partial
  Identification of Probability Distributions}, Springer Science \& Business
  Media.

\harvarditem{Molinari}{2020}{molinari2020microeconometrics}
Molinari, F.  \harvardyearleft 2020\harvardyearright , `Microeconometrics with
  partial identification', {\em Handbook of econometrics} {\bf 7},~355--486.

\harvarditem{Moon \harvardand\ Schorfheide}{2012}{moon2012bayesian}
Moon, H.~R. \harvardand\ Schorfheide, F.  \harvardyearleft
  2012\harvardyearright , `Bayesian and frequentist inference in partially
  identified models', {\em Econometrica} {\bf 80}(2),~755--782.

\harvarditem[Otis et~al.]{Otis, Burnham, White \harvardand\
  Anderson}{1978}{otis1978statistical}
Otis, D.~L., Burnham, K.~P., White, G.~C. \harvardand\ Anderson, D.~R.
  \harvardyearleft 1978\harvardyearright , Statistical inference from capture
  data on closed animal populations, {\em in} `Wildlife {M}onographs', Vol.~62,
  Wiley, pp.~3--135.

\harvarditem[Paz-Bailey et~al.]{Paz-Bailey, Jacobson, Guardado, Hernandez,
  Nieto, Estrada \harvardand\ Creswell}{2011}{paz2011many}
Paz-Bailey, G., Jacobson, J., Guardado, M., Hernandez, F., Nieto, A., Estrada,
  M. \harvardand\ Creswell, J.  \harvardyearleft 2011\harvardyearright , `How
  many men who have sex with men and female sex workers live in {El Salvador}?
  {Using} respondent-driven sampling and capture--recapture to estimate
  population sizes', {\em Sexually Transmitted Infections} {\bf
  87}(4),~279--282.

\harvarditem[Plettinckx et~al.]{Plettinckx, Crawford, Antoine, Gremeaux
  \harvardand\ Van~Baelen}{2020}{plettinckx2020estimates}
Plettinckx, E., Crawford, F.~W., Antoine, J., Gremeaux, L. \harvardand\
  Van~Baelen, L.  \harvardyearleft 2020\harvardyearright , `Estimates of people
  who injected drugs within the last 12 months in {Belgium} based on a
  capture-recapture and multiplier method', {\em Drug and Alcohol Dependence}
  {\bf 219},~108436.

\harvarditem{Pollock}{1991}{pollock1991review}
Pollock, K.~H.  \harvardyearleft 1991\harvardyearright , `Review papers:
  modeling capture, recapture, and removal statistics for estimation of
  demographic parameters for fish and wildlife populations: past, present, and
  future', {\em Journal of the American Statistical Association} {\bf
  86}(413),~225--238.

\harvarditem{{R Core Team}}{2020}{R2020}
{R Core Team}  \harvardyearleft 2020\harvardyearright , {\em R: A Language and
  Environment for Statistical Computing}, R Foundation for Statistical
  Computing, Vienna, Austria.
\newline\harvardurl{https://www.R-project.org/}

\harvarditem[Romano et~al.]{Romano, Shaikh \harvardand\
  Wolf}{2014}{romano2014practical}
Romano, J.~P., Shaikh, A.~M. \harvardand\ Wolf, M.  \harvardyearleft
  2014\harvardyearright , `A practical two-step method for testing moment
  inequalities', {\em Econometrica} {\bf 82}(5),~1979--2002.

\harvarditem{Schwarz}{1978}{schwarz1978estimating}
Schwarz, G.  \harvardyearleft 1978\harvardyearright , `Estimating the dimension
  of a model', {\em The Annals of Statistics} {\bf 6}(2),~461--464.

\harvarditem{Seber}{1982}{seber1982estimation}
Seber, G. A.~F.  \harvardyearleft 1982\harvardyearright , {\em The Estimation
  of Animal Abundance and Related Parameters}, Vol.~8, Blackburn press
  Caldwell, New Jersey.

\harvarditem{Silverman}{2020}{silverman2020multiple}
Silverman, B.~W.  \harvardyearleft 2020\harvardyearright , `Multiple-systems
  analysis for the quantification of modern slavery: classical and bayesian
  approaches', {\em with discussion), Journal of the Royal Statistical Society,
  Series A} {\bf 183}(4).

\harvarditem{Tamer}{2010}{tamer2010partial}
Tamer, E.  \harvardyearleft 2010\harvardyearright , `Partial identification in
  econometrics', {\em Annu. Rev. Econ.} {\bf 2}(1),~167--195.

\harvarditem{Tilling}{2001}{tilling2001capture}
Tilling, K.  \harvardyearleft 2001\harvardyearright , `Capture-recapture
  methods--useful or misleading?', {\em International Journal of Epidemiology}
  {\bf 30}(1),~12.

\harvarditem[Van~Baelen et~al.]{Van~Baelen, Plettinckx, Antoine \harvardand\
  Gremeaux}{2020}{van2020prevalence}
Van~Baelen, L., Plettinckx, E., Antoine, J. \harvardand\ Gremeaux, L.
  \harvardyearleft 2020\harvardyearright , `Prevalence of {HCV} among people
  who inject drugs in {Brussels} -- a respondent-driven sampling survey', {\em
  Harm Reduction Journal} {\bf 17}(1),~1--9.

\harvarditem{Wilks}{1938}{wilks1938large}
Wilks, S.~S.  \harvardyearleft 1938\harvardyearright , `The large-sample
  distribution of the likelihood ratio for testing composite hypotheses', {\em
  The Annals of Mathematical Statistics} {\bf 9}(1),~60--62.

\harvarditem[Williams et~al.]{Williams, Nichols \harvardand\
  Conroy}{2002}{williams2002analysis}
Williams, B.~K., Nichols, J.~D. \harvardand\ Conroy, M.~J.  \harvardyearleft
  2002\harvardyearright , {\em Analysis and Management of Animal Populations},
  Academic Press.

\harvarditem{Wolter}{1990}{wolter1990capture}
Wolter, K.~M.  \harvardyearleft 1990\harvardyearright , `Capture-recapture
  estimation in the presence of a known sex ratio', {\em Biometrics}
  pp.~157--162.

\harvarditem[Xu et~al.]{Xu, Fyfe, Walker \harvardand\
  Cowen}{2014}{xu2014estimating}
Xu, Y., Fyfe, M., Walker, L. \harvardand\ Cowen, L.~L.  \harvardyearleft
  2014\harvardyearright , `Estimating the number of injection drug users in
  greater {Victoria, Canada} using capture-recapture methods', {\em Harm
  Reduction Journal} {\bf 11}(1),~9.

\harvarditem{Yauck}{2022}{yauck2022population}
Yauck, M.  \harvardyearleft 2022\harvardyearright , `Population size estimation
  for respondent-driven sampling and capture-recapture: A unifying framework',
  {\em arXiv preprint arXiv:2208.05426} .

\end{thebibliography}


\begin{thebibliography}{5}
\providecommand{\natexlab}[1]{#1}
\providecommand{\url}[1]{\texttt{#1}}
\expandafter\ifx\csname urlstyle\endcsname\relax
  \providecommand{\doi}[1]{doi: #1}\else
  \providecommand{\doi}{doi: \begingroup \urlstyle{rm}\Url}\fi

\bibitem[Romano et~al.(2014)Romano, Shaikh, and Wolf]{romano2014practical}
Joseph~P Romano, Azeem~M Shaikh, and Michael Wolf.
\newblock A practical two-step method for testing moment inequalities.
\newblock \emph{Econometrica}, 82\penalty0 (5):\penalty0 1979--2002, 2014.

\bibitem[Andrews(2000)]{andrews2000inconsistency}
Donald~WK Andrews.
\newblock Inconsistency of the bootstrap when a parameter is on the boundary of
  the parameter space.
\newblock \emph{Econometrica}, pages 399--405, 2000.

\bibitem[Wilks(1938)]{wilks1938large}
Samuel~S Wilks.
\newblock The large-sample distribution of the likelihood ratio for testing
  composite hypotheses.
\newblock \emph{The Annals of Mathematical Statistics}, 9\penalty0
  (1):\penalty0 60--62, 1938.

\bibitem[Van~der Vaart(2000)]{van2000asymptotic}
Aad~W Van~der Vaart.
\newblock \emph{Asymptotic Statistics}, volume~3.
\newblock Cambridge university press, 2000.

\bibitem[Rao(2017)]{rao2017advanced}
Suhasini~Subba Rao.
\newblock \emph{Advanced Statistical Inference}.
\newblock 2017.
\newblock URL
  \url{https://www.stat.tamu.edu/~suhasini/teaching613/STAT613.pdf}.

\end{thebibliography}

\end{document}